\title{Multivariate one-sided testing in matched observational studies as an adversarial game}
\author{
	Peter L. Cohen \\
	Operations Research Center\\
	Massachusetts Institute of Technology\\
	1 Amherst Street\\
	Cambridge Massachusetts 02142, U.S.A\\
	\texttt{plcohen@mit.edu} \\
	\And
	Matt A. Olson \\
	The Voleon Group\\
	2170 Dwight Way\\
	Berkeley, CA 94704, U.S.A \\
	\texttt{molson@voleon.com} \\
	\And
	Colin B. Fogarty \\
	Operations Research and Statistics Group\\
	Massachusetts Institute of Technology\\
	100 Main Street\\
	Cambridge, Massachusetts 02142,  U.S.A\\
	\texttt{cfogarty@mit.edu} 
}
\begin{document}
	
	\maketitle
	
	\begin{abstract}
		We present a multivariate one-sided sensitivity analysis for matched observational studies, appropriate when the researcher has specified that a given causal mechanism should manifest itself in effects on multiple outcome variables in a known direction. The test statistic can be thought of as the solution to an adversarial game, where the researcher determines the best linear combination of test statistics to combat nature's presentation of the worst-case pattern of hidden bias. The corresponding optimization problem is convex, and can be solved efficiently even for reasonably sized observational studies. Asymptotically the test statistic converges to a chi-bar-squared distribution under the null, a common distribution in order restricted statistical inference. The test attains the largest possible design sensitivity over a class of coherent test statistics, and facilitates one-sided sensitivity analyses for individual outcome variables while maintaining familywise error control through is incorporation into closed testing procedures. \\ \textbf{Keywords}: Coherence; Chi-bar-squared distribution; Sensitivity analysis; Convex programming.
		
	\end{abstract}
	
	\section{On Multiplicity and Causality}
	Controlled randomization protects empirical evidence against a host of counterclaims. A significant finding may well be due to random chance alone, but cannot be dismissed on the grounds of biases unaccounted for by the study's design. Observational evidence provides no such assurance, and causal inference in observational studies involves ambiguity which randomization eschews: Is the association an effect, or is it bias from self-selection? Anticipating skepticism, a practitioner may take measures when planning an observational study to properly frame the debate, rendering certain criticism unwarranted should the practitioner's hypothesis be true. While ambiguity cannot be eliminated, quasi-experimental devices may be employed to help clarify the step from association to causation in observational studies; see \citet{sha02} and \citet{ros15quasi} for an overview. One such device, known alternatively as pattern specificity, multiple operationalism, or coherence, advocates that observational studies be designed with the objective of confirming a complex pattern of predictions made by the causal theory in question. This is in keeping with Fisher's notion of elaborate theories, which advocates that the practitioner ``envisage as many different consequences of [a causal hypothesis's] truth as possible, and plan observational studies to discover whether each of these consequences is found to hold" \citep[][\S 5, p. 252]{coc65}. Complex predictions imperil the practitioner's hypothesis, as doubt is cast should any prediction fail in the observational study at hand. Should the evidence prove coherent with the theory's predictions, fortification is provided as attributing a complex pattern to hidden bias requires that hidden bias could reproduce the particular pattern of association. 
	
	One way in which a theory can be made elaborate is through predicting that an intervention will affect multiple outcome variables in a prespecified direction. While the practitioner hopes that each prediction holds, should certain predictions fail she would regardless like to quantify which components came to fruition as a means of refining understanding of the mechanism in question. With this comes the attending issues of multiple comparisons. Concerns over a loss in power from multiplicity control may lead practitioners to instead investigate the outcome they believe \textit{a priori} will be most affected, reducing the extent to which Fisher's advice is followed. 
	
	The qualitative benefits of multiple outcomes in observational studies are thus at odds with the statistical corrections they require. This tension exists not only when assuming no hidden bias, but also in the sensitivity analysis where the researcher quantifies the magnitude of hidden bias required to overturn the study's conclusions. In what follows, we present a new method for sensitivity analysis in multivariate one-sided testing, appropriate when the researcher anticipates a particular direction of effects for multiple outcome variables. The test adaptively combines outcome-specific test statistics, has the optimal design sensitivity over a class of multivariate tests respecting coherence, and leads to substantial improvements in power when the researcher's prediction proves correct. The method greatly attenuates the impact of multiplicity control on power for testing individual outcome variables through its use in closed testing procedures \citep{mar76}, facilitating the analysis of multiple outcomes for demonstrating coherence.

	\section{Hidden Bias in Matched Observational Studies}\label{sec: Sensitivity Analysis}
	\subsection{A finely stratified experiment with multiple outcomes} \label{sec:Notation}
	There are $I$ independent strata, the $i$th of which contains $n_i\geq 2$ individuals. Individual $j$ in stratum $i$ has a $P$-dimensional vector of observed covariates $x_{ij}$, along with an unobserved covariate $u_{ij}$, $0\leq u_{ij}\leq 1$. The strata are formed such that $x_{ij}\approx x_{ij'}$ for any two individuals $j\neq j'$ in stratum $i$.  We take $Z_{ij}$ as the indicator of treatment for the $j$th individual in stratum $i$, such that $Z_{ij}=1$ if assigned to treatment and $Z_{ij}=0$ otherwise. Each strata contains one treated individual and $n_i-1$ controls such that $\sum_{j=1}^{n_i}Z_{ij} = 1$ $(i=1,...,I)$. See \citet{fog18mit} for more on this particular class of stratified experiments, referred to as finely stratified experiments. Forthcoming developments readily extend to full-matched observational studies; see \citet[Ex. 4.12]{R02} for details.
	
	Each individual has two vectors of potential outcomes of length $K$: the responses for each outcome variable under control $r_{Cij} = (r_{Cij1},...,r_{CijK})^T$, and the responses under treatment $r_{Tij} = (r_{Tij1},...,r_{TijK})^T$. The $K$-dimensional vector of treatment effects $\tau_{ij} = r_{Tij} - r_{Cij}$ is not observed; instead, we observe the vector $\vect{R}_{ij} = Z_{ij}\vect{r}_{Tij} + (1 - Z_{ij})\vect{r}_{Cij}$.  Let $Z = (Z_{11},...,Z_{In_{I}})^T$ be the lexicographically ordered vector of treatment assignments of length $N$, and let the analogous hold for $u$ along with $r_{Ck}, r_{Tk}$ and $R_k$ for $k=1,...,K$.  The $N\times K$ matrix with lexicographically ordered rows containing $R_{ij}^T$ is $R$. 
	
	Let $\F = \left\{\vect{r}_{Cij}, \vect{r}_{Tij}, \vect{x}_{ij}, \vect{u}_{ij}: i=1,...,I; j=1,...,n_i \right\}$ be a set containing the potential outcomes along with the measured and unmeasured covariates for each individual in the observational study. In what follows we consider inference conditional upon $\cF$, such that a generative model for the potential outcomes is neither assumed nor required. Let $\Omega = \{z: \sum_{j=1}^{n_i} z_{ij}=1; i=1,...,I\}$ be the set of $\prod_{i=1}^{I}n_i$ treatment assignments adhering to the stratified design, and let $\cZ = \{Z\in \Omega\}$ be the event that the observed treatment assignment satisfies this design. In a finely stratified experiment $\P(Z_{ij}=1\mid \cF, \cZ) = 1/n_i$ and $\P(Z=z\mid \cF, \cZ) = 1/|\Omega|$ where $|A|$ is the cardinality of the set $A$. 
	
	\subsection{A model for biased treatment assignment}
	
	Matched observational studies aim to mimic the finely stratified experiment described in \S \ref{sec:Notation}. Matching algorithms assign individuals to matched sets on the basis of observed covariates such that $x_{ij}\approx x_{ij'}$ for individuals $j$ and $j'$ in the same matched set $i$; see \citet{Han04} and \citet{zub12} among many for more on matching algorithms and the optimization problems underpinning them. A simple model for treatment assignment in an observational study states that before matching, individuals are assigned to treatment independently with unknown probabilities $\pi_{ij} = \P(Z_{ij}=1\mid \cF)$. While one may hope that $\pi_{ij}\approx \pi_{ij'}$ after matching, proceeding as such would be specious due to both the potential presence of unmeasured confounding \textcolor{black}{and residual imbalances on the observed covariates in each matched set}. The model of \citet[][Chapter 4]{R02} stipulates that individuals in the same matched set may differ in their odds of assignment to treatment by at most a factor of $\Gamma$, 
	\begin{equation}\label{eq:sensmodel}
	\frac{1}{\Gamma} \leq \frac{\pi_{ij}(1 - \pi_{ij'})}{\pi_{ij'}(1 - \pi_{ij})} \leq \Gamma.
	\end{equation} 
	
	The parameter $\Gamma$ controls the degree to which matching solely on observed covariates may have failed to align the assignment probabilities in each matched set. The value $\Gamma = 1$ returns a randomized finely stratified experiment, while $\Gamma>1$ allows for a tilt in the randomization distribution to a degree controlled by $\Gamma$. For instance, $\Gamma=2$ stipulates that individuals in the same matched set might truly differ in their odds of receiving the treatment by a factor of at most two. Returning attention to the matched structure by conditioning on $\cZ$, this model is equivalent to assuming \begin{equation}\label{eqn: treatment assignment distribution}
	\Prob{\vect{Z} = \vect{z} \given \F,\, \cZ} = \frac{\exp\left(\gamma \vect{z}^{T} \vect{u} \right)}{\sum_{b \in \Omega}\exp\left(\gamma \vect{b}^{T} \vect{u} \right)},
	\end{equation}
	where $\gamma = \log(\Gamma)$ and $u$ lies in the $N$-dimensional unit cube, call it $\mathcal{U}$, embodying both differences in unobserved covariates and latent discrepancies in observed covariates after matching; see \citet{ros95} or \citet[][Chapter 4]{R02} for a proof of this equivalence. 
	
	\subsection{Sensitivity analysis for a particular outcome}    
	Assume without loss of generality that the outcomes have been recorded such that positive values for the treatment effects $\tau_{ijk}$ are predicted by the causal theory under study. For each outcome variable, we consider tests of the null hypothesis of non-positive treatment effects,
	\begin{align*}
	H_k: r_{Tijk} \leq r_{Cijk}\;\;\; (i=1,...,I; j=1,...,n_i).
	\end{align*}
	$H_k$ is a composite null hypothesis. Elements of $H_k$ include the null of a non-positive constant effect for all individuals, $r_{Tijk} = r_{Cijk} +\delta_k$ for any scalar $\delta_k \leq 0$; and certain models of tobit effects, such as $r_{Cijk} = \max\{r_{Tijk}, 0\}$. Fisher's sharp null of no effect is $\delta_k=0$, thus representing the boundary of $H_k$.  \textcolor{black}{		The composite null $H_{k}$ is distinct from Neyman's weak null of no average treatment effect for the $k$th outcome variable. That said, both nulls allow for inference without prespecifying the particular pattern of effect heterogeneity. Neyman's null has been seen as a flexible way to test for existence of treatment effect while accommodating arbitrary effect heterogeneity. Unfortunately, testing Neyman's null on the $k$th outcome greatly constrains the test statistics available to the practitioner, requiring the use of a studentized difference-in-means or a regression-adjusted estimator \citep{RandTestsWeakNulls}. These test statistics have poor theoretical properties when used in a sensitivity analysis. The null $H_{k}$ is also more general than a sharp null, but can still be tested through randomization inference using statistics such as $m$-tests with better theoretical properties in the potential presence of hidden bias  \citep{ros07}.} \textcolor{black}{The null $H_k$ is not limited to continuous outcome variables, and can also be employed with ordinal outcomes. In fact, our method may be used with potential outcomes of any partially ordered set. See \S\ref{sec: Discussion} for further details.}
	
	We consider test statistics for each outcome variable which are effect increasing sum statistics. Sum statistics are statistics of the form $T_k(Z, R_k) = Z^Tq_k$ where $q_k = q_k(R_k)$ is a pre-specified function of the observed responses $R_k$. A test statistic is effect increasing if $T_k(z, r^*_k) \geq T_k(z, r_k)$ whenever $(2z_{ij}-1)(r^*_{ijk}-r_{ijk})\geq 0$ for all $i$ and $j$\textcolor{black}{, where $r^*_{ijk}$ denotes a different value of the potential outcome}. In words, this means that if every treated unit did better with $r^*_k$ than with $r_k$, and if every control did worse with $r^*_k$ than with $r_k$, then the test statistic corresponding to the observed outcomes $r^*_k$ would be larger than it would have been under $r_k$. Most familiar test statistics, including differences-in-means, rank tests, and $m$-tests are endowed with these properties; see \citet[][Chapter 2.4.4]{R02} and \citet[][\S 3.1]{ros16} for additional examples.

	If Fisher's sharp null is true then $R_{k} = r_{Ck}$, and hence $T_k(Z,R_k) = T_k(Z, r_{Ck})$. For a particular $\Gamma>1$, the test statistic's null distribution under Fisher's sharp null is   
	\begin{align}\label{eq:null}
	\P\{T_k(Z, r_{Ck})\geq v\mid \cF, \cZ\} &=  \sum_{z\in \Omega}1\{T_k(Z, r_{Ck})\geq v\}\frac{\exp\left(\gamma \vect{z}^{T} \vect{u} \right)}{\sum_{b \in \Omega}\exp\left(\gamma \vect{b}^{T} \vect{u} \right)},
	\end{align}   
	where $1(A)$ is an indicator that the condition $A$ was met. At $\Gamma=1$ (\ref{eq:null}) is simply the proportion of treatment assignments where the test statistic is greater than or equal to $v$, returning the usual randomization inference in a finely stratified experiment. For $\Gamma>1$ (\ref{eq:null}) is unknown due to its dependence on the nuisance vector $u$. A sensitivity analysis proceeds for a particular $\Gamma$ by maximizing (\ref{eq:null}) with $v=t_k$, the observed value of the test statistic for a particular $\Gamma$, resulting in the largest possible $p$-value for the desired inference subject to (\ref{eq:sensmodel}) holding at $\Gamma$. The practitioner then increases $\Gamma$ until the test no longer rejects the null hypothesis. This changepoint value of $\Gamma$ serves as a measure of how robust the study's finding was to unmeasured confounding. See \citet{gas00} and \citet{ros18} for large-sample approaches for conducting a sensitivity analysis for Fisher's sharp null with a single outcome variable under (\ref{eq:sensmodel}). Since $T_k$ is assumed effect increasing, the worst-case $p$-value for a sensitivity analysis for Fisher's sharp null attains the largest $p$-value over the composite null $H_k$. That is, a sensitivity analysis for Fisher's sharp null also provides a valid sensitivity analysis for $H_k$ \citep[Prop. 1]{CDM17}.
	

	
	\section{Sensitivity Analysis with multiple outcomes}
	\subsection{A directional global null hypothesis}
	There are $K$ hypotheses $H_1,...,H_K$, one each for the null of non-positive treatment effects for the $k$th outcome variable. We concern ourselves with a level-$\alpha$ sensitivity analysis for the global null hypothesis that all $K$ of these hypotheses are true,
	\begin{align}\label{eq:global}
	H_0 : \bigwedge_{k=1}^K H_k.
	\end{align} 
	
	Through closed testing \citep{mar76}, a valid sensitivity analysis for (\ref{eq:global}) also facilitates tests of the outcome-specific hypotheses $H_k$ while controlling the familywise error rate. See \citet[][\S 5]{FS16} for more on closed testing procedures applied to sensitivity analyses. 
	
	
	\subsection{Linear combinations of test statistics and their distribution}
	
	In what follows it is useful to define $\varrho_{ij} = \Prob{Z_{ij} = 1 \given \F, \cZ}$. Under the global null (\ref{eq:global}) and recalling that our test statistics are of the form $T_k = Z^Tq_k$ with $q_k$ fixed under the global null, the expectation $\mu(\varrho)$ and covariance $\Sigma(\varrho)$ for the vector of test statistics $T = (T_{1}, \ldots, T_{K})^{T}$ are
	\begin{equation*}
	\mu(\vect{\varrho})_{k} = \sum_{i = 1}^{I}\sum_{j = 1}^{n_{i}}q_{ijk}\varrho_{ij},\;\;
	\Sigma(\vect{\varrho})_{k,\ell} =  \sum_{i = 1}^{I} \left\{\sum_{j = 1}^{n_{i}}q_{ijk}q_{ij\ell}\varrho_{ij} - \left(\sum_{j = 1}^{n_{i}}q_{ijk}\varrho_{ij}\right)\left(\sum_{j = 1}^{n_{i}}q_{ij\ell}\varrho_{ij}\right)   \right\}. 
	\end{equation*}For a given vector of probabilities $\varrho$, under suitable conditions on the constants $q_{ijk}$ the distribution of $T$ is asymptotically multivariate normal through an application of the Cram\'er-Wold device. That is, for any fixed nonzero vector $\lambda = (\lambda_1,...,\lambda_K)^T$ the standardized deviate $\lambda^T\{T-\mu(\varrho)\}/\left\{\lambda^T\Sigma(\varrho)\lambda\right\}^{1/2}$ is asymptotically standard normal.
	
	The actual values of $\varrho$ are unknown to the practitioner due to their dependence on hidden bias. Instead, the constraints imposed by the sensitivity model (\ref{eq:sensmodel}) on $\varrho$ can be represented by a polyhedral set. For a particular $\Gamma$ this set, call it $\cP_\Gamma$, contains vectors $\varrho$ such that (i) $\varrho_{ij}\geq 0$ $(i=1,...,I; j=1,...,n_i)$; (ii) $\sum_{i=1}^{n_i} \varrho_{ij} = 1$ $(i=1,...,I)$; and (iii) $s_i \leq \varrho_{ij} \leq \Gamma s_i$ for some $s_i$ $(i=1,...,I; j=1,...,n_i)$. Conditions (i) and (ii) simply reflect that $\varrho_{ij}$ are probabilities, while the $s_i$ terms in (iii) arise from applying a Charnes-Cooper transformation \citep{CC62} to (\ref{eqn: treatment assignment distribution}). 
	
	\subsection{Multivariate sensitivity analysis through a two-person game}\label{sec:twoplayer}
	Let $t = (t_1,...,t_K)^T$ be the observed vector of test statistics. In this subsection only, suppose interest lies not in a test of (\ref{eq:global}), but rather in the narrower intersection null that Fisher's sharp null holds for all $K$ outcome variables. For fixed $\lambda$, a large-sample sensitivity analysis for Fisher's sharp null could be achieved by minimizing the standardized deviate $\lambda^T\{t-\mu(\varrho)\}/\{\lambda^T\Sigma(\varrho)\lambda\}^{1/2}$ over all $\varrho$ such that $\varrho \in \cP_\Gamma$, and assessing whether the minimal objective value exceeds the appropriate critical value from a standard normal.
	
	With $\lambda$ pre-specified, the sensitivity analysis imagines what would happen if the worst-case, adversarial bias at a given level of $\Gamma$ were present. If the practitioner fixes the linear combination $\lambda$ ahead of time, she has no further recourse against such adversarial attacks. The practitioner may instead consider a two-person game of the form
	\begin{align}\label{eq:twoplayer}
	{a}^*_{\Gamma, \Lambda} = \underset{\varrho\in\cP_\Gamma}{\min}\;\; \underset{\lambda \in \Lambda}{\sup}\;\; \frac{\lambda^T\{t-\mu(\varrho)\}}{\{\lambda^T\Sigma(\varrho)\lambda\}^{1/2}},
	\end{align}where $\Lambda$ is some subset of $\mathbb{R}^K$ without the zero vector. The adversary may be thought of as embodying future counterclaims regarding the study's conclusions. In keeping with the scientific method the investigator recognizes that her conclusions will be subjected to challenges by her peers, and through the sensitivity analysis assesses whether a particular counterclaim could possibly overturn the study's findings.  The critic aligns the unobserved confounders to inflate the $p$-value for the performed inference, while the investigator may choose weights for each outcome within the constraints imposed by $\Lambda$ in response to the configuration of unmeasured confounders selected by the critic. \textcolor{black}{With regards to \eqref{eq:twoplayer}, as $\Gamma$ grows during the process of a sensitivity analysis the outer minimization takes place over a sequence of growing feasible regions. In the sense of the two-player game, this corresponds to the adversary having more and more flexibility in assigning unfavorable treatment allocation distributions.}  
	
	Most familiar large-sample sensitivity analyses for Fisher's sharp null hypothesis are instances of this game for particular choices of $\Lambda$. Setting $\Lambda = \{e_k\}$ where $e_k$ is a vector with a 1 in the $k$th coordinate and zeroes elsewhere returns a univariate sensitivity analysis for the $k$th outcome with a greater-than alternative, while $-e_k$ would return the less-than alternative. When the test statistics $T_K$ are rank tests, setting $\Lambda = \{1_K\}$ where $1_K$ is a vector containing $K$ ones returns the coherent rank test of \citet{ros97}. When $\Lambda = \{e_1,...,e_K\}$, the collection of standard basis vectors, (\ref{eq:twoplayer}) returns the method of \citet{FS16} with greater-than alternatives, and $\Lambda = \{\pm e_1,..., \pm e_K\}$ gives the same method with two-sided alternatives. The method of \citet{ros16} amounts to a choice of $\Lambda = \mathbb{R}^K\; \backslash \; \{0_K\}$, i.e. all possible linear combinations except the vector $0_K$ containing $K$ zeroes. 
	
	While appealing as a unifying framework for multivariate sensitivity analyses, the form (\ref{eq:twoplayer}) would be of little practical use if the corresponding optimization problem could not be readily solved. The problem (\ref{eq:twoplayer}) is not itself convex; however, consider replacing it with
	\begin{align}\label{eq:mod}
	{b}^*_{\Gamma, \Lambda} = \underset{\varrho\in\cP_\Gamma}{\min}\;\; \underset{\lambda \in \Lambda}{\sup}\;\; \max\left[0,\frac{\lambda^T\{t-\mu(\varrho)\}}{\{\lambda^T\Sigma(\varrho)\lambda\}^{1/2}}\right]^2,
	\end{align}
	and let $f(\lambda, \varrho) = \max[0, \lambda^T\{t-\mu(\varrho)\}/\{\lambda^T\Sigma(\varrho)\lambda\}^{1/2}]^2$. This replaces negative values for the standardized deviate with zero, and then takes the square of the result. It is a monotone non-decreasing transformation of the standardized deviate in general, and is strictly increasing whenever the standardized deviate is larger than zero. The following proposition, proved in the Supplementary Material, establishes convexity of (\ref{eq:mod}).
	\begin{proposition}\label{prop: convexity of objective}
		The function $g(\varrho) = \sup_{\lambda \in \Lambda} f(\lambda,\varrho)$ is convex in $\varrho$ for any set $\Lambda$ without the zero vector.
	\end{proposition}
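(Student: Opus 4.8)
\emph{Proof plan.} The plan hinges on the fact that a pointwise supremum of convex functions is convex, so the first move is to reduce to a single $\lambda$: it suffices to show that $\varrho\mapsto f(\lambda,\varrho)$ is convex on $\cP_\Gamma$ for each \emph{fixed} $\lambda\in\Lambda$, after which $g=\sup_{\lambda\in\Lambda}f(\lambda,\cdot)$ is immediately convex. So fix $\lambda$ and write $w_{ij}=\sum_{k=1}^{K}\lambda_k q_{ijk}$, which is a constant under the global null. Reading off the displayed formulas for $\mu(\varrho)$ and $\Sigma(\varrho)$, the numerator $L(\varrho):=\lambda^{T}\{t-\mu(\varrho)\}=\lambda^{T}t-\sum_{i}\sum_{j}w_{ij}\varrho_{ij}$ is affine in $\varrho$, while the square of the denominator rearranges, stratum by stratum, as
\begin{align*}
Q(\varrho):=\lambda^{T}\Sigma(\varrho)\lambda=\sum_{i=1}^{I}\Big\{\sum_{j=1}^{n_i}w_{ij}^{2}\varrho_{ij}-\Big(\sum_{j=1}^{n_i}w_{ij}\varrho_{ij}\Big)^{2}\Big\}.
\end{align*}

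The crucial step — and, I expect, the only nonroutine one — is to recognize that $Q$ is \emph{concave} in $\varrho$: within each stratum the bracketed term is an affine function of $\varrho$ minus the square of an affine function of $\varrho$, hence concave, and a sum of concave functions is concave. This is exactly where the finely stratified design is used, since it is what forces the covariance to split into a sum of within-stratum variances, each of the benign ``affine minus squared-affine'' form; without that structure the quadratic cross terms need not assemble into a concave function. The same decomposition also shows $Q(\varrho)\ge 0$ on $\cP_\Gamma$, because there the $i$th bracket is the variance of $w_{i1},\dots,w_{in_i}$ under the probability vector $(\varrho_{i1},\dots,\varrho_{in_i})$, so $f(\lambda,\cdot)$ is well defined.

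It then remains to combine an affine numerator with a concave, nonnegative denominator in the right way — merely observing ``convex over concave'' would not suffice. The cleanest device is the variational identity $\big(\max[0,x]\big)^{2}/y=\sup_{s\ge 0}\{2sx-s^{2}y\}$, valid for $y>0$ (and giving the natural value at $y=0$), which yields $f(\lambda,\varrho)=\sup_{s\ge 0}\{2s\,L(\varrho)-s^{2}Q(\varrho)\}$. For each fixed $s\ge 0$ the expression inside is convex in $\varrho$, being the sum of the affine function $2sL$ and the nonnegative multiple $s^{2}$ of the convex function $-Q$; hence $f(\lambda,\cdot)$ is a supremum of convex functions, so convex, and taking the outer supremum over $\lambda\in\Lambda$ closes the argument. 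An alternative to the variational step is the composition rule: $(x,y)\mapsto\big(\max[0,x]\big)^{2}/y$ is convex on $\mathbb{R}\times(0,\infty)$, nondecreasing in $x$ and nonincreasing in $y$, so composing it with the affine $L$ and the concave $Q$ preserves convexity — though the variational form has the advantage of also covering boundary points of $\cP_\Gamma$ where $Q(\varrho)=0$.
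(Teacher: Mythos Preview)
Your proposal is correct and follows essentially the same architecture as the paper's proof: establish that $\lambda^{T}\Sigma(\varrho)\lambda$ is concave in $\varrho$, that $\lambda^{T}\{t-\mu(\varrho)\}$ is affine, and then combine these via a convexity-preserving device before taking the outer supremum over $\lambda$. The only cosmetic differences are that the paper verifies concavity of $Q$ by computing the block-diagonal Hessian (each block being $-\tfrac{1}{2}Q_i^{T}\lambda\lambda^{T}Q_i\preceq 0$) rather than your ``affine minus squared-affine'' observation, and that it reaches convexity of $\varrho\mapsto f(\lambda,\varrho)$ through the perspective of $\psi(x)=\max(0,x)^{2}$ followed by the monotone composition rule---exactly the alternative you sketch at the end---in place of your variational identity $\max(0,x)^{2}/y=\sup_{s\ge 0}\{2sx-s^{2}y\}$.
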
 
	
	The proof requires showing that for any $\lambda\in\Lambda$, the function $f(\lambda, \varrho)$ is convex in $\varrho$. As the pointwise supremum over a potentially infinite set of convex functions is itself convex \citep[][\S 3.2.3]{BV04}, the result then follows. The convexity of $g(\varrho)$ allows for its minimization over the polyhedral set $\cP_\Gamma$ such that the value $b^*_{\Gamma,\Lambda}$ in (\ref{eq:mod}) can be computed in practice.  For any $\Gamma$ and $\Lambda$, a sensitivity analysis through (\ref{eq:twoplayer}) would proceed by comparing the value ${a}^*_{\Gamma,\Lambda}$ to a suitable critical value $c_{\alpha,\Lambda}$.  Observe that ${a}^*_{\Gamma,\Lambda} = (b^*_{\Gamma,\Lambda})^{1/2}$ for ${a}^*_{\Gamma,\Lambda} \geq 0$. If $\alpha \leq 0.5$ then $c_{\alpha,\Lambda}$ is non-negative for any choice of $\Lambda$. Consequently ${a}^*_{\Gamma,\Lambda} \geq c_{\alpha, \Lambda}$, leading to a rejection of the null, if and only if $b^*_{\Gamma,\Lambda} \geq	c_{\alpha,\Lambda}^2$ so long as $\alpha \leq 0.5$. Through this equivalence, a large-sample sensitivity analysis using (\ref{eq:twoplayer}) can proceed through the solution of the convex program (\ref{eq:mod}).
	
	\subsection{The practitioner's price}
	The critical value $c_{\alpha,\Lambda}$ depends on the structure of $\Lambda$, through which it is seen that additional flexibility in the set $\Lambda$ does not come without a cost. Intuition for the price to be paid may be formed at $\Gamma=1$ in (\ref{eq:twoplayer}). When $\Lambda$ is a singleton the asymptotic reference distribution is the standard normal. If $\Lambda$ is instead a finite set with $|\Lambda| = L > 1$ simply comparing the optimal value of (\ref{eq:global}) to the $1-\alpha$ quantile of a standard normal would not provide a level$-\alpha$ test due to multiplicity issues. One could proceed using a Bonferroni correction based on the $L$ comparisons, which would inflate the critical value. When $\Lambda = \mathbb{R}^K\;\backslash\; \{0_K\}$, \citet{ros16} applies a result on quadratic forms of multivariate normals \setcitestyle{square}\citep[e.g.][page 60, 1f.1(i)]{Rao73}\setcitestyle{round} to show that one must instead use the square root of a  critical value from a $\chi^2_K$ distribution when conducting inference through (\ref{eq:twoplayer}). This result underpins Scheff\'e's method for multiplicity control while comparing all linear contrasts of a multivariate normal \citep{sch53}. In the potential presence of hidden bias, the additional flexibility afforded by a richer set $\Lambda$ often offsets the loss in power from controlling for multiple comparisons, particularly in large samples. We discuss this further in \S \ref{sec: Design Sensitivity}, but see also \citet[][\S 6]{FS16} and \citet[][\S 4]{ros16}.

	\section{The Null Distribution Over Coherent Combinations}
	\subsection{Adaptive linear combinations over the non-negative orthant} \label{sec: Nonnegative Orthant}
	By allowing the set $\Lambda$ to be arbitrary, the developments \S \ref{sec:twoplayer} were presented with Fisher's sharp null in mind. A moment's reflection reveals that should inference instead concern the composite null (\ref{eq:global}) of non-positive effects for all outcome variables, the set $\Lambda$ must be constrained to maintain the desired size of the procedure. If $\Lambda$ allows for arbitrary linear combinations, evidence consistent with non-positive treatment effects for each outcome variable may nonetheless result in a rejection of the null hypothesis based on (\ref{eq:twoplayer}) beyond the nominal rate by setting $\lambda_k$ negative for each $k$. Directional control is lost without constraining the signs of the elements of $\Lambda$.
	
	Following \citet[][\S 9.4]{R02}, we define a family of coherent test statistics by restricting the  vector $\lambda$ to lie in the non-negative orthant, $\Lambda_+ = \{\lambda: \lambda_k \geq 0\;\; (k=1...,K);\;\; \sum \lambda_k > 0\}$. The coherent test of \citet{ros97} with $\lambda = 1_K$ is a particular element of $\Lambda_+$. We instead consider a large-sample sensitivity analysis for (\ref{eq:twoplayer}) with $\Lambda=\Lambda_+$, hence optimizing over the entire space of coherent linear combinations. We describe a projected subgradient descent method for solving (\ref{eq:mod}) with $\Lambda=\Lambda_+$ in the Supplementary Material. Subgradients are straightforward to compute, and projections onto $\cP_\Gamma$ are facilitated by the constraints being separable across matched sets. 
	
	Let $\tilde{\varrho}$ be the true, though typically unknown, vector of assignment probabilities and consider the random variable
	\begin{align}\label{eq:A}
	A_{\Lambda_+}(Z,R) = \underset{\lambda \in \Lambda_+}{\sup}\;\; \frac{\lambda^T\{T-\mu(\tilde{\varrho})\}}{\{\lambda^T\Sigma(\tilde{\varrho})\lambda\}^{1/2}}.
	\end{align}
	Let $R_Z$ denote the observed responses when the treatment assignment is $Z$. Let $G(v, R_Z)$ be the reference distribution based on the observed outcome $R_Z$ assuming Fisher's sharp null,
	\begin{align}\label{eq:randdist}
	G(v, R_Z) &= \sum_{b\in \Omega}1\{A_{\Lambda_+}(b, R_Z)\leq v\}\P(Z=b\mid \cF, \cZ),
	\end{align} and let $G^{-1}(1-\alpha, R_Z)$ be its $1-\alpha$ quantile. Observe that the reference distribution $G(v, R_Z)$ itself varies over elements of $\Omega$ through its dependence on $R_Z$ if Fisher's sharp null is false.
	
	Proposition \ref{prop:size}, proved in the Supplementary Material, states that a valid test of the composite null of non-positive effects $H_0$ can be achieved through the randomization distribution of $A_{\Lambda_+}$ under the assumption of Fisher's sharp null. Through an analogous proof, the randomization distribution also provides an unbiased test against positive alternatives of the form $\tau_{ijk} \geq 0$ $(i=1,..,I; j=1,...,n_i; k = 1,...,K)$ with at least one strict inequality.
	
	\begin{proposition}\label{prop:size}
		Suppose that the global null (\ref{eq:global}) of non-positive treatment effects is true and assume that the test statistics $T_k$ $(k=1,...,K)$ are effect increasing.  Then
		\begin{align*}
		\P\{A_{\Lambda_+}(Z,R_Z) \geq G^{-1}(1-\alpha, R_Z)\} \leq \alpha,
		\end{align*}
		such that the reference distribution under Fisher's sharp null controls the Type I error rate for any element of the composite null $H_0$. 
	\end{proposition}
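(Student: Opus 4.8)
The plan is to bound the realized randomization $p$-value by an \emph{honest} randomization $p$-value for a fixed statistic, which is then super-uniform --- the same mechanism that validates a randomization test of Fisher's sharp null. Fix the realized assignment $Z$ and rewrite the test in $p$-value form: with the usual upper-quantile convention for $G^{-1}(\cdot,R_Z)$, the event $\{A_{\Lambda_+}(Z,R_Z)\ge G^{-1}(1-\alpha,R_Z)\}$ is the event $\{p(Z)\le\alpha\}$, where $p(Z)=\sum_{b\in\Omega}1\{A_{\Lambda_+}(b,R_Z)\ge A_{\Lambda_+}(Z,R_Z)\}\,\P(Z=b\mid\cF,\cZ)$. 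For $z\in\Omega$ let $R_z$ be the response matrix that would be observed under assignment $z$, i.e.\ $R_{z,ijk}=z_{ij}r_{Tijk}+(1-z_{ij})r_{Cijk}$, and put $S(z)=A_{\Lambda_+}(z,R_z)$. Given $\cF$, $S$ is a deterministic function of $z$, and by construction $S(Z)=A_{\Lambda_+}(Z,R_Z)$.

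The key lemma is that $A_{\Lambda_+}(b,R_Z)\ge S(b)$ for every $b\in\Omega$. I would prove it in two moves. First, the componentwise inequality $T_k(b,R_{b,k})\le T_k(b,R_{Z,k})$ for each outcome $k$: this is precisely the effect-increasing property of $T_k$ applied with treatment vector $b$, ``improved'' responses $R_{Z,\cdot k}$ and baseline responses $R_{b,\cdot k}$, since $(2b_{ij}-1)(R_{Z,ijk}-R_{b,ijk})=(2b_{ij}-1)(Z_{ij}-b_{ij})\tau_{ijk}\ge0$ --- the factor $(2b_{ij}-1)(Z_{ij}-b_{ij})$ is non-positive for either value of $b_{ij}$ and $\tau_{ijk}\le0$ under $H_0$. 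Second, because $\Lambda_+$ is the non-negative orthant, the inner supremum $t\mapsto\sup_{\lambda\in\Lambda_+}\lambda^T(t-\mu)/(\lambda^T\Sigma\lambda)^{1/2}$ is non-decreasing in each coordinate of $t$; feeding in the componentwise ordering of the $T_k$ gives the lemma, once one also checks that the centering $\mu(\varrho^*)$ and scaling $\Sigma(\varrho^*)$ --- if they are taken to depend on the observed responses through the $q_k$ rather than held fixed at their realized values --- move only in the conservative direction. Taking $b=Z$ recovers $S(Z)=A_{\Lambda_+}(Z,R_Z)$, so $\{b:S(b)\ge S(Z)\}\subseteq\{b:A_{\Lambda_+}(b,R_Z)\ge A_{\Lambda_+}(Z,R_Z)\}$, whence $\tilde p(Z):=\sum_b 1\{S(b)\ge S(Z)\}\,\P(Z=b\mid\cF,\cZ)\le p(Z)$.

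To conclude, note that $\tilde p(Z)$ is an ordinary randomization $p$-value for the fixed statistic $S$: conditionally on $\cZ$ both the realized $Z$ and the summation index $b$ are governed by $\P(\cdot\mid\cF,\cZ)$, so $\tilde p(Z)$ is stochastically at least uniform and $\P\{\tilde p(Z)\le\alpha\}\le\alpha$. Combined with $\tilde p(Z)\le p(Z)$ this yields $\P\{p(Z)\le\alpha\}\le\P\{\tilde p(Z)\le\alpha\}\le\alpha$, which is the claim. The unbiasedness remark follows by reversing the inequalities: when $\tau_{ijk}\ge0$ for all $i,j,k$ with at least one strict inequality, the same computation gives $T_k(b,R_{b,k})\ge T_k(b,R_{Z,k})$, hence $A_{\Lambda_+}(b,R_Z)\le S(b)$, the set containment flips, and $\P\{p(Z)\le\alpha\}\ge\P\{\tilde p(Z)\le\alpha\}$, so the rejection probability is at least the level.

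The step I expect to be the main obstacle is passing from the componentwise ordering of the outcome-specific statistics to the ordering of the scalarized statistic $A_{\Lambda_+}$. This is where restricting to the coherent cone $\Lambda_+$ is indispensable --- for a general $\Lambda$ the inner supremum is not coordinatewise monotone, consistent with the earlier observation that directional control is lost outside the non-negative orthant --- and it is also where one must dispatch the dependence of $\mu(\varrho^*)$ and $\Sigma(\varrho^*)$ on the data; this dependence is innocuous for, e.g., rank scores with assignment probabilities that are uniform within matched sets, but in general requires a short separate argument. A minor bookkeeping point is defining $G^{-1}(1-\alpha,R_Z)$ carefully in the presence of atoms, so that the displayed ``$\ge G^{-1}$'' event is exactly the level-$\alpha$ rejection region.
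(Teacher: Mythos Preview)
Your argument is correct and uses the same ingredients as the paper's proof: a second copy of the randomization, the effect-increasing property combined with the coordinatewise monotonicity that comes from restricting to $\Lambda_+$, and the definition of the $1-\alpha$ quantile of $G(\cdot,R_z)$. The paper carries this out by writing the rejection probability as a double sum over $(z,b)\in\Omega\times\Omega$, replacing $1\{A_{\Lambda_+}(z,R_z)>G^{-1}(1-\alpha,R_z)\}$ by $1\{A_{\Lambda_+}(b,R_z)>G^{-1}(1-\alpha,R_z)\}$ via effect-increasing, swapping the order of summation, and invoking the quantile definition; you package the same step by introducing the fixed statistic $S(z)=A_{\Lambda_+}(z,R_z)$ and comparing its honest randomization $p$-value $\tilde p$ to the reported $p$. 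Your key lemma $A_{\Lambda_+}(b,R_Z)\ge S(b)$ is exactly what underlies the paper's inequality, and your two-move decomposition (componentwise ordering of the $T_k$, then monotonicity of the sup over the non-negative orthant) makes that step more transparent than the paper's one-line appeal to ``effect increasing.'' The concern you raise about $\mu(\varrho^*)$ and $\Sigma(\varrho^*)$ depending on the observed responses through $q_k$ is genuine and is not dispatched explicitly in the paper's proof either; it is innocuous under Fisher's sharp null, but under the composite null it does require the short separate argument you allude to.
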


	Both the observed value $A_{\Lambda_+}=a_{\Lambda_+}$ and the probabilities $\P(Z=z\mid \cF, \cZ)$ are unknown in the observational study at hand due to their dependence on the true conditional assignment probabilities $\tilde{\varrho}$. Through the solution to (\ref{eq:twoplayer}) we instead observe the value ${a}^*_{\Gamma,\Lambda_+}$, which bounds $a_{\Lambda_+}$ from below so long as $\tilde{\varrho}\in \cP_\Gamma$. That said, the true randomization distribution (\ref{eq:randdist}) typically remains unknown outside of a randomized experiment as it depends on $\tilde{\varrho}$. For many test statistics, such as those formed when $\Lambda$ is a singleton, the asymptotic reference distribution does not depend on $\tilde{\varrho}$ after suitable standardization. In what follows, we consider the large-sample distribution of $A_{\Lambda_+}$ under Fisher's sharp null.
	
	\subsection{The chi-bar-squared distribution}
	Comparing the optimal value of (\ref{eq:twoplayer}) with $\Lambda=\Lambda_+$ to the $1-\alpha$ quantile of a standard normal would not provide a valid level$-\alpha$ sensitivity analysis, as it would not account for the optimization over coherent combinations. While one could proceed with the square root of the $1 - \alpha$ quantile of a $\chi_{K}^{2}$ distribution, doing so would be unduly conservative. The $\chi^2_K$ critical value allows for optimization over all linear combinations, while here we have constrained ourselves to combinations lying in the non-negative orthant. Theorem \ref{thm: asymptotic chi bar sqared} provides the appropriate reference distribution given this restriction.

	\begin{theorem}\label{thm: asymptotic chi bar sqared}
		Suppose that $I^{-1}\Sigma(\tilde{\varrho})$ has an positive definite limit $M$ as $I\rightarrow \infty$ and the random vector $\Sigma(\vect{\tilde{\varrho}})^{-1/2}\left\{T - \mu(\vect{\tilde{\varrho}})\right\}$ converges in distribution to a $K$-dimensional vector of independent standard normals.  Then, as $I \rightarrow \infty$ the random variable $A_{\Lambda_+}^2$ converges in distribution to a $\ChiBarSq(M^{-1}, \Lambda_+)$ random variable under Fisher's sharp null. 
	\end{theorem}
	
	The proof is deferred to the Supplementary Material. The Supplementary Material also contains a discussion of sufficient conditions such that $\Sigma(\tilde{\varrho})^{-1/2}\{T - \mu(\tilde{\varrho})\}$ converges in distribution to a multivariate normal, which amount to assumptions about the vectors of constants $q_k$ $(k=1,...,K)$. For instance, one sufficient condition would be to stipulate that $I^{-1}\sum_{i = 1}^{I}\sum_{j = 1}^{n_{i}}q_{ijk}^{4}$ is uniformly bounded for all $I \in \N$ and all $k = 1, \ldots, K$.

	The  $\ChiBarSq$  (``chi-bar-squared") is a common family of distributions arising in order restricted statistical inference \textcolor{black}{\citep{SS02}}.  To illustrate, let $X$ be a mean zero $K$-variate normal random vector with positive definite covariance matrix $V$,  and define the random variable
	\begin{equation}\label{eqn: the chi-bar-sq random variable}
	\ChiBarSq(V, \Lambda_+) = X^{T}V^{-1}X - \inf\limits_{\theta \in \Lambda_+}(X - \theta)^{T}V^{-1}(X - \theta).
	\end{equation}
	Letting $\theta$ denote the mean vector of a multivariate normal, (\ref{eqn: the chi-bar-sq random variable}) is equivalent to the likelihood ratio statistic for testing the null $H_0: \theta_k = 0$ $(k=1,...,K)$ versus the alternative $H_a: \theta_k  \geq 0$ $(k=1,...,K)$ with strict inequality in at least one component \citep{K63}. Observe that replacing $\Lambda_+$ with $\mathbb{R}^K$ in (\ref{eqn: the chi-bar-sq random variable}) would return $X^TV^{-1}X$, and with it the usual $\chi^2_K$ distribution. Computation of (\ref{eqn: the chi-bar-sq random variable}) requires solving a quadratic program, an easy task with modern solvers but one which historically limited the adoption of methods requiring the $\ChiBarSq$ distribution.
	
	The cumulative distribution function of the $\ChiBarSq(V, \Lambda_+)$ is $\P\{\ChiBarSq(V, \Lambda_+) \leq c\} = \sum_{i = 0}^{K}w_{i}(V, \Lambda_+)\Prob{\chi^{2}_{i} \leq c},$ a mixture of $\chi^2_i$ distributions $(i=0,...,K)$ with $\chi^2_0$ representing a pointmass at zero. The $i$th weight $w_{i}(V, \Lambda_+)$ is equal to the probability that the vector $V^{-1/2}X$ has exactly $i$ positive components. The weights depend upon the covariance $V$ through the corresponding correlation matrix $C$: any two covariance matrices $V'$ and $V$ with the same correlation structure $C$ yield the same weights for $\bar{\chi}^2$ \setcitestyle{square}\citep[][Proposition 3.6.1 (11)]{SS05}. \setcitestyle{round} See \citet{K63, rob88order}; and \citet{SS05} for more on the role of the $\ChiBarSq$ distribution in multivariate one-sided testing.
	
	\citet{S03} presents an extension of Scheff\'e's method for multiple comparisons to linear combinations subject to cone constraints such as lying in the non-negative orthant. Arguments therein show that strong duality holds in (\ref{eq:A}), such that the optimal value for (\ref{eq:A}), $A_{\Lambda_+}$, equals the optimal value of the dual. The optimal solution to the dual is
	\begin{equation}\label{eqn: heuristic proof inner objective}
	A_{\Lambda_+} = \left\{h^{T}\Sigma(\tilde{\varrho})h - \inf\limits_{\lambda \in \Lambda_+}(h - \lambda)^{T}\Sigma(\tilde{\varrho})(h - \lambda)\right\}^{1/2},
	\end{equation}
	where $h = \Sigma^{-1}(\tilde{\varrho})\{T-\mu(\tilde{\varrho})\}$. Under mild conditions $h$ is asymptotically multivariate normal with covariance equal to the limit of $I\Sigma^{-1}(\tilde{\varrho})$.  Comparing (\ref{eqn: heuristic proof inner objective}) to (\ref{eqn: the chi-bar-sq random variable}) provides intuition for the $\bar{\chi}^2$ limiting distribution. Moving forwards, we refer to the procedure using $A_{\Lambda+}$ to facilitate inference as the $\ChiBarSq$-test. In the Supplementary Material, we present Type I error control simulations indicating that the $\ChiBarSq$ reference distribution provides a reasonable approximation to the true randomization distribution of $A_{\Lambda+}$ with moderate sample sizes.
	
	\subsection{The critical value and its dependence on the unknown assignment probabilities}   \label{sec:critical value}
	A large-sample sensitivity analysis can be conducted by comparing the optimal value of (\ref{eq:twoplayer}) over coherent linear combinations, $a_{\Gamma, \Lambda_+}$ to the square root of the $1 - \alpha$ quantile of a $\ChiBarSq\{\Sigma^{-1}(\tilde{\varrho}), \Lambda_+\}$ distribution. Recalling that $\tilde{\varrho}$ is the true vector of assignment probabilities, we are faced with a difficulty encountered by neither a univariate sensitivity analysis for a particular outcome nor the method of \citet{ros16}: The asymptotic reference distribution depends on the assignment probabilities $\tilde{\varrho}$ through the covariance $\Sigma(\tilde{\varrho})$ even after proper normalization. While $\tilde{\varrho}$ is known in a randomized experiment, the purpose of a sensitivity analysis is to assess robustness of a study's findings as $\tilde{\varrho}$ is allow to vary within bounds imposed by $\Gamma$.
	
	The dependence of the covariance on nuisance parameters is commonly encountered in applications of the $\ChiBarSq$ distribution \citep[][\S 2.2]{SS02}. One solution is to compute $p$-values through the bound $\P\{\ChiBarSq(V, \Lambda_+) \geq c\} \leq 0.5 \{\Prob{\chi_{K - 1}^{2} \geq c} + \Prob{\chi_{K}^{2} \geq c}\}$; see \citet[][Theorem 6.2]{per69} for a proof. This upper bound is attained in the limit as the correlation between all outcomes converges to one, and can itself be quite conservative in the presence of more moderate degrees of correlation typically observed in practical applications.
	
	Motivated by the particular structure imposed by a sensitivity analysis, we instead use a two-stage procedure to better upper bound the worst-case critical value for each $\Gamma$. In a sensitivity analysis the range of the nuisance parameters $\tilde{\varrho}$ is controlled by $\Gamma$. At $\Gamma=1$ $\tilde{\varrho}$ is entirely specified, such that in finely stratified experiments the appropriate $\ChiBarSq$ distribution is known. As $\Gamma$ increases the bounds imposed by membership in $\cP_\Gamma$ widen. For each pair of outcomes $k$ and $\ell$, we first find upper and lower bounds on the correlation between $k$ and $k'$ given $\tilde{\varrho} \in \cP_\Gamma$, call them $C^{(\ell)}_{k,k',\Gamma}$ and $C^{(u)}_{k,k',\Gamma}$. We then maximize the $1-\alpha$ quantile of a $\ChiBarSq(C^{-1},\Lambda_+)$ distribution over the correlation matrix $C$ subject to $C^{(\ell)}_{k,k',\Gamma}\leq C_{k,k'} \leq C^{(u)}_{k,k',\Gamma}$ for all $k,k'$ and $C$ being a correlation matrix. See the Supplementary Material for implementation details along with a discussion of the case $K=2$, where it is seen that the worst-case critical value is attained at the lower bound on the correlation. In practice, we find that this can provide meaningful improvements in the power of the procedure; see the Supplementary Material for an illustration.
	
	\section{Design Sensitivity and Power for the $\ChiBarSq$-Test}
	\subsection{Design sensitivity}\label{sec: Design Sensitivity}
	Suppose that the treatment in question actually has an effect in the direction of the alternative, and further that there is truly no hidden bias such that inference at $\Gamma=1$ would be justified. As would be the case in practice, the researcher analyzing the observational study is unaware of these favorable conditions. Thus, she would like to reject the null hypothesis not only under the assumption of no unmeasured confounding, but also for values $\Gamma>1$ to assess whether the rejection of the null is robust to certain degrees of hidden bias. The power of a level-$\alpha$ sensitivity analysis is the probability that the procedure correctly rejects the null hypothesis at some pre-specified value of $\Gamma \geq 1$. In what follows we will assume a stochastic generative model for the outcome variables, an assumption which greatly simplifies power calculations.
	
	Under mild conditions, there is a value $\designSens$ such that the power of a sensitivity analysis converges to one for all $\Gamma < \designSens$, and converges to zero for all for all $\Gamma > \designSens$; this value is called the design sensitivity of the test \citep{ros04}. It quantifies the asymptotic ability of the test to discriminate treatment effect under the concern of bias in the treatment allocation process, and can vary substantially across choices of test statistics. For a fixed data generating model, a test with high design sensitivity is preferable to a test with low design sensitivity.
	
	For fixed choices of the univariate test statistics $T_k = Z^Tq_k$ $(k=1,...,K)$, we consider the design sensitivity of multivariate tests based upon (\ref{eq:twoplayer}) and their dependence on the set $\Lambda$. Theorem~\ref{thm: nested lambda design sensitivity} shows that design sensitivity is a monotonic non-decreasing function with respect to the partial ordering over sets $\Lambda$ given by inclusion.
	\begin{theorem}\label{thm: nested lambda design sensitivity}
		Suppose $\Lambda_{1} \subseteq \Lambda_{2}$. Under mild conditions, the design sensitivity of \eqref{eq:twoplayer} using $\Lambda = \Lambda_{1}$ is less than or equal to the design sensitivity of \eqref{eq:twoplayer} using $\Lambda = \Lambda_{2}$.
	\end{theorem}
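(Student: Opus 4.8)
The plan is to follow the standard route to design sensitivities (cf.\ \citet{ros04}): reduce the comparison of two random testing procedures to a monotonicity property of a single deterministic function of $\Gamma$. Work in the favorable situation posited in \S\ref{sec: Design Sensitivity}, where the assumed stochastic generative model for the outcomes produces a genuine effect in the hypothesized direction and $\varrho^{*}$ equals the finely-stratified value with $\varrho^{*}_{ij}=1/n_i$. Under the mild moment conditions of the sort described after Theorem~\ref{thm: asymptotic chi bar sqared}, a law of large numbers supplies deterministic per-stratum limits $\bar\mu_T=\lim_I I^{-1}t$, $\bar\mu(\varrho)=\lim_I I^{-1}\mu(\varrho)$ and $\bar\Sigma(\varrho)=\lim_I I^{-1}\Sigma(\varrho)$, the last two holding uniformly over the compact polyhedral set $\cP_\Gamma$. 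First I would use these uniform limits, compactness of $\cP_\Gamma$, and uniform positive definiteness of $\bar\Sigma(\varrho)$ on $\cP_\Gamma$ to interchange the limit $I\to\infty$ with the $\min$--$\sup$ in \eqref{eq:twoplayer}, obtaining in probability
\[
    I^{-1/2}\,a^{*}_{\Gamma,\Lambda}\;\longrightarrow\;\psi_\Lambda(\Gamma)\;:=\;\min_{\varrho\in\cP_\Gamma}\ \sup_{\lambda\in\Lambda}\ \frac{\lambda^{T}\{\bar\mu_T-\bar\mu(\varrho)\}}{\{\lambda^{T}\bar\Sigma(\varrho)\lambda\}^{1/2}}.
\]

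Second, I would tie $\psi_\Lambda$ to power. Since $\Lambda$ does not vary with $I$, the critical value $c_{\alpha,\Lambda}$ is a fixed quantile of a known distribution, hence $O(1)=o(\sqrt I)$, and $c_{\alpha,\Lambda}\ge 0$ whenever $\alpha\le 1/2$. Therefore $a^{*}_{\Gamma,\Lambda}=\sqrt I\,\psi_\Lambda(\Gamma)+O_p(1)$ diverges to $+\infty$ when $\psi_\Lambda(\Gamma)>0$ and to $-\infty$ when $\psi_\Lambda(\Gamma)<0$, so the procedure rejects with probability tending to one in the former case and to zero in the latter. Because $\cP_\Gamma$ increases with $\Gamma$, $\psi_\Lambda$ is non-increasing in $\Gamma$; invoking the same mild conditions that make design sensitivity well defined, $\psi_\Lambda$ crosses zero and $\designSens_\Lambda$ is that crossing (with the convention $\designSens_\Lambda=\infty$ if $\psi_\Lambda$ remains positive).

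Third is the comparison across sets, which is now essentially a one-line consequence. For $\Lambda_1\subseteq\Lambda_2$ and any fixed $\varrho$, $\sup_{\lambda\in\Lambda_1}(\cdot)\le\sup_{\lambda\in\Lambda_2}(\cdot)$; evaluating the larger supremum at the minimizer of the smaller one shows $\min_{\varrho\in\cP_\Gamma}\sup_{\lambda\in\Lambda_1}(\cdot)\le\min_{\varrho\in\cP_\Gamma}\sup_{\lambda\in\Lambda_2}(\cdot)$, i.e.\ $\psi_{\Lambda_1}(\Gamma)\le\psi_{\Lambda_2}(\Gamma)$ for every $\Gamma$. Hence $\{\Gamma:\psi_{\Lambda_1}(\Gamma)>0\}\subseteq\{\Gamma:\psi_{\Lambda_2}(\Gamma)>0\}$, and since both functions are non-increasing their zero-crossings obey $\designSens_{\Lambda_1}\le\designSens_{\Lambda_2}$. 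The same monotonicity holds pathwise, $a^{*}_{\Gamma,\Lambda_1}\le a^{*}_{\Gamma,\Lambda_2}$ for every realization, which makes the mechanism transparent: below $\designSens_{\Lambda_1}$ the $\Lambda_1$-statistic diverges at rate $\sqrt I$ and drags the larger $\Lambda_2$-statistic up with it, so the latter overwhelms its own critical value even though enlarging $\Lambda$ inflates that critical value---the inflation is only $O(1)$ and is therefore immaterial for design sensitivity.

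The main obstacle is the first step: rigorously interchanging the $I\to\infty$ limit with the $\min$--$\sup$ so that $I^{-1/2}a^{*}_{\Gamma,\Lambda}\to\psi_\Lambda(\Gamma)$ under the generative model. This is where the ``mild conditions'' are spent. One needs convergence of the standardized sum statistics that is uniform over $\varrho\in\cP_\Gamma$ (and, when $\Lambda$ is infinite, uniform enough in $\lambda$ after normalization), together with $\bar\Sigma(\varrho)$ staying uniformly positive definite on $\cP_\Gamma$ so the normalized deviate is well behaved; continuity of $\varrho\mapsto\sup_{\lambda\in\Lambda}(\cdot)$ and compactness of $\cP_\Gamma$ then justify passing the limit through the optimization. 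By contrast, verifying $c_{\alpha,\Lambda}=o(\sqrt I)$ is routine given the discussion in \S\ref{sec:critical value}, and every step after the first is elementary.
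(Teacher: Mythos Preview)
Your argument is correct, but the paper takes a substantially shorter route that you in fact already sketch in your third paragraph as an afterthought. The paper's entire proof is the pathwise observation $A^{*}_{\Gamma,\Lambda_1}\le A^{*}_{\Gamma,\Lambda_2}$ together with the \emph{definition} of design sensitivity: it simply assumes $\designSens_{\Lambda_1}$ and $\designSens_{\Lambda_2}$ exist, notes that for any $\Gamma<\designSens_{\Lambda_1}$ one has $\P(A^{*}_{\Gamma,\Lambda_1}\ge k)\to 1$ for every fixed $k$, and concludes immediately that the same holds for $A^{*}_{\Gamma,\Lambda_2}$, hence $\designSens_{\Lambda_2}\ge\designSens_{\Lambda_1}$. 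No limiting function $\psi_\Lambda$, no interchange of limit and $\min$--$\sup$, no uniform convergence over $\cP_\Gamma$.

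The difference is where the ``mild conditions'' are cashed in. The paper pushes all of that into the hypothesis that both design sensitivities exist (deferring to \citet{ros04} and \citet{R13}), and then the comparison is a two-line stochastic-domination argument. You instead re-derive the design sensitivity as the zero of $\psi_\Lambda$, which is more informative---it tells you what $\designSens_\Lambda$ actually is and why the critical-value inflation is $O(1)$ and hence immaterial---but it forces you to confront the interchange-of-limits problem that you correctly flag as the main obstacle. For the purpose of proving \emph{this} theorem, that obstacle is avoidable: your closing remark about the pathwise inequality dragging the larger statistic up is already a complete proof once existence is assumed, and you could promote it from commentary to the main argument.
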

	The proof of Theorem~\ref{thm: nested lambda design sensitivity} is deferred to the Supplementary Material. In light of Theorem~\ref{thm: nested lambda design sensitivity}, it may be tempting to take $\Lambda = \R^{K} \setminus \{0_K\}$ in order to achieve the greatest design sensitivity. While this would result in a valid test of Fisher's sharp null, it does not provide a valid test of the null hypothesis of non-positive treatment effects: should the signs of $\lambda_k$ be left unconstrained, evidence of a negative treatment effect may result in a large optimal value for (\ref{eq:twoplayer}). Restricting attention to the set of coherent linear combinations $\Lambda_+$, Theorem~\ref{thm: nested lambda design sensitivity} gives rise to the following optimality property for the $\ChiBarSq$-test due to its optimizing over the entirety of $\Lambda_{+}$.
	\begin{corollary}\label{cor: optimality over coherent tests}
		The $\ChiBarSq$-test achieves greatest design sensitivity among coherent tests based upon \eqref{eq:twoplayer} with $\Lambda\subseteq \Lambda_+$
	\end{corollary}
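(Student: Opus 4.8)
The plan is to obtain the corollary as an essentially immediate consequence of Theorem~\ref{thm: nested lambda design sensitivity}. Recall from \S\ref{sec: Nonnegative Orthant} that the $\ChiBarSq$-test is precisely the large-sample sensitivity analysis obtained from the game \eqref{eq:twoplayer} with $\Lambda = \Lambda_+$; it therefore itself belongs to the class of coherent tests described in the corollary, since trivially $\Lambda_+ \subseteq \Lambda_+$. Write $\designSens(\Lambda)$ for the design sensitivity of \eqref{eq:twoplayer} when the weight set is taken to be $\Lambda$.

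First I would fix an arbitrary coherent set $\Lambda \subseteq \Lambda_+$ and apply Theorem~\ref{thm: nested lambda design sensitivity} with $\Lambda_{1} = \Lambda$ and $\Lambda_{2} = \Lambda_+$, which yields $\designSens(\Lambda) \le \designSens(\Lambda_+)$. The ``mild conditions'' invoked by that theorem concern only the assumed generative model for the outcome variables and the constants $q_{k}$, not the particular weight set, so they transfer to the present setting without further assumptions, and in particular hold simultaneously for $\Lambda$ and for $\Lambda_+$.

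The one point that needs care is that the $\ChiBarSq$-test, as implemented, does not compare $a^{*}_{\Gamma,\Lambda_+}$ (equivalently $b^{*}_{\Gamma,\Lambda_+}$, through the equivalence in \S\ref{sec:twoplayer}) to an arbitrary threshold but to the square root of the $1-\alpha$ quantile of a $\ChiBarSq$ distribution, or to the correlation-maximized upper bound of \S\ref{sec:critical value}. I would observe that the design sensitivity --- the changepoint $\designSens$ above which power tends to $0$ and below which power tends to $1$ --- is unaffected by the choice of any critical value that remains bounded in probability as $I \to \infty$: after centering the statistic at its worst-case null expectation and rescaling by the relevant $\sqrt{I}$ factor, the rescaled statistic converges to a deterministic limit, the rescaled threshold converges to zero, and the reject/retain decision is governed solely by the sign of that limit, which changes precisely at $\designSens$. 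The $\ChiBarSq$ quantiles and their correlation-maximized counterparts are deterministic and finite, hence $O_p(1)$, so the design sensitivity of the $\ChiBarSq$-test coincides with $\designSens(\Lambda_+)$ as supplied by Theorem~\ref{thm: nested lambda design sensitivity}.

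Combining these observations, for every coherent set $\Lambda \subseteq \Lambda_+$ one has $\designSens(\Lambda) \le \designSens(\Lambda_+)$, and the right-hand side is the design sensitivity of the $\ChiBarSq$-test, which is itself a member of the class under consideration. Hence the $\ChiBarSq$-test attains the maximum design sensitivity over all coherent tests based on \eqref{eq:twoplayer} with $\Lambda \subseteq \Lambda_+$. The main --- and essentially only --- obstacle in writing this out rigorously is the critical-value bookkeeping of the preceding paragraph, together with verifying that the regularity conditions behind Theorem~\ref{thm: nested lambda design sensitivity} are in force for the relevant pair of weight sets; the substance of the result is the inclusion-monotonicity of design sensitivity already established there.
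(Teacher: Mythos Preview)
Your proposal is correct and mirrors the paper's own treatment: the corollary is stated without separate proof and is presented as an immediate consequence of Theorem~\ref{thm: nested lambda design sensitivity} applied with $\Lambda_{1}=\Lambda\subseteq\Lambda_{+}=\Lambda_{2}$. Your additional paragraph verifying that the bounded $\ChiBarSq$ critical value does not alter the design sensitivity is a helpful point that the paper leaves implicit (its proof of Theorem~\ref{thm: nested lambda design sensitivity} already phrases the power condition in terms of exceeding an arbitrary fixed scalar $k$), so you are simply making explicit what the paper takes for granted.
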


	\subsection{Finite-sample power for rejecting the global null}\label{sec:global null}\label{sec: Power Simulations}
	Corollary \ref{cor: optimality over coherent tests} illustrates that despite the larger critical value necessitated by the $\ChiBarSq$-test by optimizing over $\Lambda_+$,  the $\ChiBarSq$-tests achieves the largest possible design sensitivity over the set of coherent multivariate tests. This reflects that in large samples bias trumps variance in the analysis of observational studies, such that the differences in critical values are rendered irrelevant in the limit. In moderate samples, the variance of the null distribution plays a larger role in the power of a sensitivity analysis, such that differences in critical values can make a more substantial difference for procedures with similar design sensitivities. 
	
	We present a simulation study comparing the power of a sensitivity analysis based upon the $\ChiBarSq$-test to two competitors: the method of \citet{FS16}; and the test using \eqref{eq:twoplayer} with $\Lambda = \{1_K\}$, which we refer to as the equal-weight test. Combining test statistics with equal weights is only sensible when the constituent test statistics $T_k$ $(k=1,..,K)$ reflect evidence against the null hypothesis on the same scale. This would be true of rank statistics as described in \citet{ros97}, and would also be true of suitably scaled $m$-statistics of the type described in \citet{ros07}; however, if one outcome is tested using a rank-sum statistic and another with an $m$-statistic for instance, the ``equal-weight" test would give unreasonable weight to the rank-sum recorded outcome.  The $\ChiBarSq$-test and the test of \citet{FS16} do not require comparable scales for the test statistics as they are scale invariant.
	
	The simulations are performed on $I=300$ matched pairs with $K=3$ outcomes. In each simulation, we generate $I$ mean-zero unit-variance trivariate normal vectors of noise $(\varepsilon_{i1}, \varepsilon_{i2}, \varepsilon_{i3})^{T}$ equicorrelated with correlation $\rho$. We then create the vector of treated-minus-control paired differences in outcomes as $(Y_{i1}, Y_{i2}, Y_{i3})^{T} = (\tau_{1}, \tau_{2}, \tau_{3})^{T} + (\varepsilon_{i1}, \varepsilon_{i2}, \varepsilon_{i3})^{T}$ for different values of the treatment effects $(\tau_{1}, \tau_{2}, \tau_{3})^{T}$. For each outcome variable, the employed test statistic is $T_k = \sum_{i=1}^I\sign(Y_{ik})\min(|Y_{ik}|/s_k, 2.5)$, where $s_k$ is the median of $|Y_{ik}|$ $(i=1,..,I)$. This amounts to a choice of a $m$-statistic with Huber's $\psi$-function, as described in \citet{ros07}.

	Table \ref{tab: design sens for chibarsq, equal-weight, and FS16} presents the values of the treatment effects and the correlation employed in the simulation study. For each combination of parameters, it further provides the design sensitivity for the $\ChiBarSq$-test and the equal-weight test. While there is no known formula for the design sensitivity of the procedure of \citet{FS16}, it is lower-bounded by the the maximal design sensitivities of the three univariate tests; this value is also presented in the table. The table reflects Corollary \ref{cor: optimality over coherent tests}: for each combination of parameters, the design sensitivity for the $\ChiBarSq$-test is greater than or equal to that of the equal-weight test and the maximal univariate test. Further, there is no consistent ordering between the equal-weight test and the max of the univariate tests, as the corresponding sets $\Lambda$ for neither test is a subset of the other.

	\begin{table}
		\centering
		\begin{tabular}{l cc cc cc}
			& \multicolumn{2}{c}{$\ChiBarSq$-Test}          & \multicolumn{2}{c}{Equal-Weight Test}  & \multicolumn{2}{c}{Max Univariate}\\
			& $\rho = 0$   & $\rho = 0.2$                    & $\rho = 0$      & $\rho = 0.2$       & $\rho = 0$      & $\rho = 0.2$\\ 
			
			$\tau = (0.25, 0.25, 0.25)^{T}$                & 2.9          & 2.4                            & 2.9             & 2.4              & 1.9                & 1.9\\
			$\tau = (0.10, 0.10, 0.50)^{T}$                   & 3.6          & 3.4                            & 2.6             & 2.2              & 3.4                & 3.4\\
			$\tau = (0.02, 0.20, 0.50)^{T}$                  & 3.8          & 3.5                            & 2.8             & 2.4              & 3.4                & 3.4
		\end{tabular}
		
		\caption{Design sensitivities for $\ChiBarSq$-test, the equal-weight test, and the largest of the three univariate tests under both independence and moderate positive correlation between outcomes. }
		\label{tab: design sens for chibarsq, equal-weight, and FS16}
	\end{table}

	Figure~\ref{fig: Normal Huber power comparisons} presents the estimated power curves of the three tests as a function of $\Gamma > 1$ in these simulation settings at $I=300$, with 2000 simulations for each combination of parameters. The correlation between paired differences varies across the columns from $\rho=0$ (left) to $\rho = 0.2$ (right), while the treatment effects vary down the rows. The first row corresponds to $\tau_{1} = \tau_{2} = \tau_{3}$ and $I$, and here it is seen that  the equal-weight test outperforms both the $\ChiBarSq$-test and \citet{FS16}. When the treatment effects are equal the linear combination $\lambda=1_K$ attains the largest design sensitivity, and by restricting $\Lambda$ to only this linear combination the lower critical value employed by the equal-weighted test improves power over that attained by the $\ChiBarSq$-test.  When one of the three outcomes is strongly affected by the treatment while the other two are minimally impacted, as in the second row of the figure, the method of \citet{FS16} and the $\ChiBarSq$-test perform similarly, while the equal-weight test lags behind. The test statistics returned by the $\ChiBarSq$-test are larger, but this is offset relative to the method of \citet{FS16} by the larger critical value necessitated. When the treatment effects are staggered between the three outcomes as in the third row, the $\ChiBarSq$-test outperforms both \citet{FS16} and the equal-weight test, particularly in the case of independence between the outcome variables. Optimizing over $\Lambda_{+}$ increases the value of the test statistic over both competitors, such that the flexibility is well worth the price of a larger critical value.

	\begin{figure}
		\centering
		\includegraphics[scale=.65]{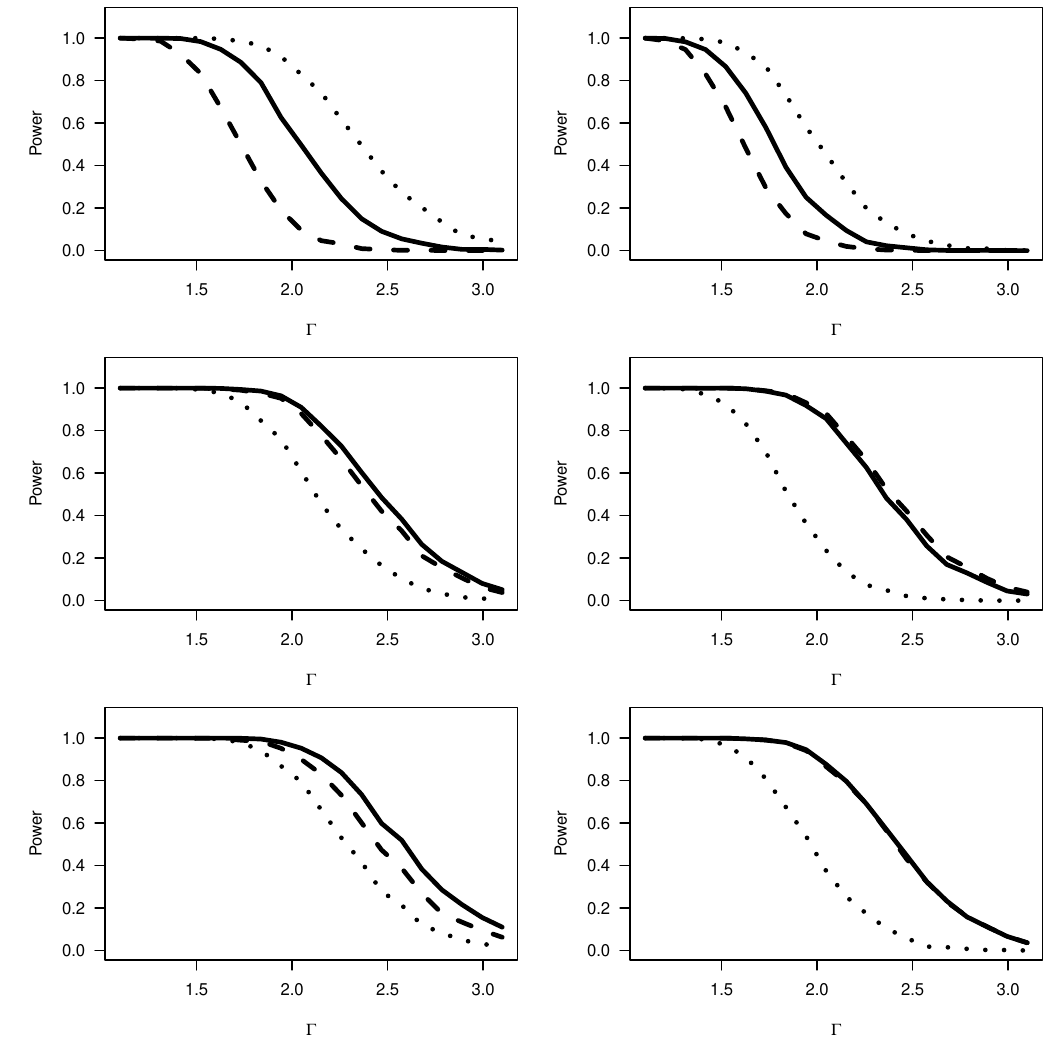}
		\caption{Power comparisons between the method of \citet{FS16} (dashed), the $\ChiBarSq$-test of this paper (solid), and the equal-weight test (dotted) as $\Gamma$ increases with $I = 300$. The first row has $\tau_{1} = \tau_{2} = \tau_{3} = 0.25$; the second row has $\tau_{1} = \tau_{2} =0.1$ and $\tau_{3} = 0.5$; and the third row has $\tau_{1} = 0.05$,  $\tau_{2} =0.2$, and $\tau_{3} = 0.5$.  The left column has $\rho = -0.2$, the center has $\rho = 0$, and the right has $\rho = 0.2$. }
		\label{fig: Normal Huber power comparisons}
	\end{figure}

	The simulations indicate that while the $\ChiBarSq$-test must have optimal power in the limit as asserted by Corollary \ref{cor: optimality over coherent tests}, it need not have the best finite-sample performance. In some cases the equal-weight test can outperform it, while in others it is outperformed by the method of \citet{FS16}. Importantly the $\ChiBarSq$-test was never the worst of the three methods considered, and the simulations show that \textit{a priori} restricting the set of combinations $\Lambda$ under consideration can substantially reduce power should the choice of $\Lambda$ be poor. For instance, the equally-weighted test performs poorly in the second and third rows of Figure \ref{fig: Normal Huber power comparisons}, while the method of \citet{FS16} is markedly worse than the other methods in the first row. The $\ChiBarSq$-test does pay a price in terms of an increased critical value, but this price acts as insurance against an unwise choice of $\Lambda$.  \textcolor{black}{Theorem~\ref{thm: nested lambda design sensitivity} offers asymptotic assurance that the $\ChiBarSq$-test performs optimally in terms of design sensitivity; furthermore, the results of Table~\ref{tab: design sens for chibarsq, equal-weight, and FS16} and Figure~\ref{fig: Normal Huber power comparisons} demonstrate that the $\ChiBarSq$-test  performs well across a broad range of treatment effect regimes without sacrificing asymptotic optimality.} 
	
	In the Supplementary Material, we present additional simulations with $I=1000$ matched pairs which begin to show convergence of behavior of the tests under comparison to their design sensitivities. We further illustrate the potential for improvements in power for testing outcome-specific null hypotheses through incorporating the $\ChiBarSq$-test into a closed testing framework, as described in \citet[][\S 6]{FS16}.
	
	\section{Illustrations of Multivariate One-Sided Sensitivity Analysis}
	
	\label{sec: Data Examples}
	\subsection{The role of coherence in two observational studies}
	We now consider the role of multiple outcomes in two observational studies.  Both examples are drawn from The National Health and Nutrition Examination Survey (NHANES) and study physiological impacts of cigarette smoking. One study investigates the impact of smoking on two measures of periodontal disease, while the other looks at whether smoking increases urinary metabolite levels of four carcinogens.  In both examples, the alternative hypothesis is that smoking should have a positive treatment effect on each of the outcome variables measured.  Rosenbaum remarks that ``If incoherence presents a substantial obstacle to a claim that the treatment caused its ostensible effects, then the absence of incoherence - that is, coherence - should entail some strengthening of that claim" \citep[][p. 119]{designofobs}. Should the evidence suggest ostensible effects of smoking incompatible with positive effects for each outcome variable, smoking's place in the causal pathway would be cast into doubt. Should the outcomes all be affected in the predicted direction, this would provide further evidence for smoking's role in the causal mechanism. 
	
	In both observational studies and for each outcome variable, we use an $m$-test based upon Huber's $\psi$-function to conduct inference with the default choices for parameters in the \texttt{senmv} function in the \texttt{sensitivitymv} package in \texttt{R}. 
	\subsection{Smoking and periodontal disease}\label{sec:perio}    
	It has been suggested that up to 42\% of cases of periodontal disease can be attributed to smoking \citep{TA00}; however, as the evidence is observational in nature this association may well be explained away by other intrinsic differences between smokers and non-smokers.  Using the 2011-2012 NHANES survey, \citet{ros16} paired $I=441$ smoking individuals to non-smokers who were similar on the basis of education, income, race, age and gender. Two outcome variables pertaining to dental health were recorded, one each for upper and lower teeth. In this context, coherence would amount to demonstrating that smoking negatively impacted dental health in both the upper and lower teeth. Such a coherent hypothesis strengthens the causal claim that cigarette smoking is detrimental to periodontal health. Should smoking only appear to impact upper teeth but not lower teeth, for instance, such incoherence would cast into doubt whether smoking is truly to blame.  
	
	At $\alpha=0.05$, the overall null hypothesis of non-positive treatment effects was rejected up until $\Gamma= 2.36$ when using the $\ChiBarSq$-test, while the equal-weight test was able to reject until $\Gamma = 2.54$.   By selecting $\Lambda = \Lambda_{+}$ Theorem~\ref{thm: asymptotic chi bar sqared} gives that the appropriate asymptotic null distribution is the $\ChiBarSq$ distribution, while for the equal-weight test the asymptotic null distribution is the standard normal. The $1-\alpha$ quantile of the standard normal lies below the square root of the $1 - \alpha$ quantile of any $\ChiBarSq$ distribution, such that the equal-weight test is able to employ a smaller critical value.  With this restriction comes the risk that equally weighting the outcome variables may be suboptimal.  In this particular observational study, it comes as little surprise that with periodontal disease in upper and lower teeth the risk was worth the while: there is little reason to suspect that magnitude of effects on upper and lower teeth should differ.  Sensitivity analysis using the method of \citet{FS16} achieves significance up to $\Gamma = 2.32$, a slightly lower value than the $\ChiBarSq$-test.  The method of \citet{FS16} takes $\Lambda$ as the set of standard unit basis vectors for $\R^{2}$ in this case, and does not combine the two related measures of periodontal disease. Despite the method of \citet{FS16} also having a smaller critical value than the $\ChiBarSq$-test, in this example this was offset by the additional flexibility afforded by the $\ChiBarSq$-test in optimizing over $\Lambda_+$.  The sensitivity analysis using the $\ChiBarSq$-test took 13 seconds to complete on a personal laptop with a 2.60GHz processor with 16GB of RAM for this data set.
	
	\subsection{Smoking and polycyclic aromatic hydrocarbons}  \label{sec:PAH}
	
	Polycyclic Aromatic Hydrocarbons (PAHs) are a class of organic compounds formed during incomplete combustion which have been labeled potentially carcinogenic to humans \citep{BGH02}.  We examine urinary concentrations of four different PAH metabolites in 432 smokers and 1206 non-smokers recorded in NHANES 2007-2008. The four metabolites are 1-hydroxyphenanthrene (1-Phen), 3-hydroxyphenanthrene (3-Phen), 1-hydroxypyrene (1-Pyr), and 9-hydroxyfluorene (9-Fluo). Full matching \citep{Han04} was employed to adjust for a host of measured covariates thought to impact one's decision to smoke and one's exposure to PAHs; see the Supplementary Material for additional details. We then proceed with inference assessing whether cigarette use increases urinary concentrations of these four PAH metabolites. As tobacco smoke contains all of these PAHs,  an incoherent result that none or only some urinary concentrations of PAH metabolites are higher in smokers than in non-smokers be discovered would cast into question whether the association between smoking cigarettes and urinary PAH concentrations was actually causal.  At $\alpha=0.05$, a sensitivity analysis using the $\ChiBarSq$-test yielded significance up to $\Gamma= 6.28$, whereas for the equal-weight test with $\Lambda = \{1_K\}$ the sensitivity analysis was only able to reject up to $\Gamma = 5.38$.  Despite the smaller critical value, in this case restricting oneself to equal weighting led to a markedly lower changepoint value of $\Gamma$ than did the $\ChiBarSq$-test. The method of \citet{FS16} rejected until $\Gamma = 6.18$.  
	
	At $\Gamma=6.28$, our procedure for upper bounding the worst-case critical value for the $\ChiBarSq$-test as described in \S \ref{sec:critical value} returns a bound of 2.20 for the test based upon ${a}^*_{6.18, \Lambda_+}$ in (\ref{eq:twoplayer}). To illustrate the improvements from this approach, the square root of the 0.95 quantile of a $\chi^2_4$ is $3.08$, while employing the conservative bound from \citet[Theorem 6.2]{per69} yields a critical value of 2.96.  The $\ChiBarSq$ sensitivity analysis ran in about 20 minutes on a personal laptop with a 2.60GHz processor with 16GB of RAM.  The length of runtime is dependent upon several factors including the number of strata, the size of the strata, the number of outcome variables, and the number of values of $\Gamma$ tested in the sensitivity analysis.  
	
	\subsection{Improvements in tests of individual null hypotheses} 
	Rejecting the global null hypothesis confirms to the experimenter that at least one of the outcome variables is impacted by treatment in the direction of the alternative.  However in order to appraise a coherent pattern of treatment impact an experimenter will need to examine the local null hypotheses of treatment impact upon each of the outcomes individually.  Correcting for multiple comparisons can be facilitated through many techniques; here we juxtapose embedding the $\ChiBarSq$-test into a closed testing framework against performing $K$ individual sensitivity analyses, one for each outcome variable, while employing a Bonferroni correction.  
	
	

	\begin{table}[h]
		\centering
		\begin{tabular}{ccccccc}
			& \multicolumn{2}{c}{Periodontal Disease} & \multicolumn{4}{c}{Polycyclic Aromatic Hydrocarbons} \\
			& Lower Teeth         & Upper Teeth        & 1-Phen       & 3-Phen      & 1-Pyr      & 9-Fluo      \\
			Closed Testing & 2.26                & 1.82               & 2.13         & 5.28        & 5.25       & 5.78        \\
			Bonferroni     & 2.17                & 1.76               & 1.99         & 4.88        & 4.84       & 5.31        \\
			Uncorrected    & 2.26                & 1.82               & 2.13         & 5.28        & 5.25       & 5.78       
		\end{tabular}
		\caption{Comparison of the closed test changepoint $\Gamma$ versus Bonferroni corrected sensitivity analysis changepoint $\Gamma$ for the data examples at $\alpha=0.05$.  The last row is the benchmark given by conducting individual tests at $\alpha$ without correction for multiplicity.} \label{tab: individ}
	\end{table}      
	Table~\ref{tab: individ} details the changepoint $\Gamma$ values for each individual outcome of the periodontal data and the PAH data while controlling the familywise error rate at $\alpha=0.05$. The $\ChiBarSq$-test embedded into a closed testing framework outperformed the Bonferroni corrected tests for each outcome. Table \ref{tab: individ} also includes the changepoint $\Gamma$ values returned by the univariate sensitivity analyses without a Bonferroni correction, i.e. with each outcome tested at $\alpha=0.05$. The table reveals that through embedding the $\ChiBarSq$-test in a closed testing procedure, in both studies we are able to report the same robustness to unmeasured confounding that would have been attained had we not controlled for multiple comparisons in the first place. Due to the improvements in power along the closed testing path furnished by the $\ChiBarSq$-test, there is no cost for evaluating coherence of all outcome variables relative to the best univariate outcome analysis.

	\section{Discussion}\label{sec: Discussion}
	\textcolor{black}{While we have tailored our presentation to continuous outcome variables, our test is equally applicable with binary outcomes and ordinal outcomes. In fact, potential outcomes of any partially ordered set are amenable to this composite null, and the remaining proofs of this paper hold true so long as the test statistics considered are effect increasing. See \citet[][ \S 2.8.5]{R02} for more on effect increasing statistics for partially ordered outcomes. The composite null $H_{k}$ for the $k$th outcome variable requires an ordered structure to the potential outcomes, and since Proposition~\ref{prop:size} relies only upon effect-increasingness of the test statistic $T_k(\cdot, \cdot)$, the result remains valid as long as one has a suitable partial ordering for the values of the potential outcomes.}
	
	\textcolor{black}{The $\ChiBarSq$-test we develop is not immediately applicable to testing Neyman's weak null.  Interestingly, even assuming strong ignorability as would be the case in a randomized experiment, it is possible for the Type I error rate to exceed $\alpha$ under the weak null. The procedure we present uses a critical value from the asymptotic form of a randomization distribution assuming the sharp null as the sharp null attains the supremum $p$-value over $H_0$ in (\ref{eq:global}). If instead only Neyman's weak null is true for all $K$ outcomes but $H_0$ is not it is possible that unspecified effect heterogeneity would cause the reference distribution used by our procedure to not stochastically dominate the randomization distribution, leading to an invalid procedure.  Unlike the univariate case and the  multivariate case with two-sided alternatives, a simple studentization does \textit{not} fix the problem even asymptotically, as the studentized reference distribution depends upon the correlation between the outcome variables. This parallels known results for multivariate permutation tests conducted in the absence of a group invariance assumption \citep{CR16}. An ongoing area of the authors' research is examining the extent to which bootstrap prepivoting may be used to create a test that both exact under $H_0$ and asymptotically valid for Neyman's weak null at $\Gamma=1$, but as of yet no extension to cases of potential unmeasured confounding has been developed. The extension of sensitivity analyses to such contexts remains an interesting and important open question.}
	
	\textcolor{black}{Our use of the $\ChiBarSq$-test in conjunction with closed testing provides a sensitivity analysis for testing patterns of directed effect among a moderate number of outcomes, as is common in many public health, econometric, and policy applications.  Unfortunately, the combinatorial blow-up inherent to closed testing prohibits large-scale multiplicity control of the sort required for applications to data sets of the scale encountered in genome-wide association studies.  Even in regimes for which closed testing is computationally infeasible, the interpretation of sensitivity analyses as two-player games lends meaningful intuition and will hopefully stimulate further algorithmic development.}

	\section*{Acknowledgements}
	The authors thank the editor, the associate editor, and two reviewers for their comments and suggestions, which substantially improved the article's content and presentation.
	
	\section*{Supplementary Material} Supplementary Material available at \textit{Biometrika} online contains theoretical results,simulation studies, further algorithmic details, additional insight into the $\ChiBarSq$ distribution, further information on the observational study on smoking and polycyclic aromatic hydrocarbons, and an \texttt{R} script for implementing the method proposed in this work.
	
	\bibliographystyle{plainnat}
	\bibliography{bibliography}

	
	
	
	
	\begin{center}
		\huge{Supplementary Materials}
	\end{center}
	
	\section{Proof of Main Results}
	\subsection{Proposition~\ref{prop: convexity of objective}}
	\setcounter{proposition}{0}
	\setcounter{theorem}{0}
	\setcounter{lemma}{0}
	
	\begin{proposition}
		The function $g(\varrho) = \sup_{\lambda \in \Lambda} f(\lambda,\varrho)$ is convex in $\varrho$ for any set $\Lambda$ without the zero vector.
	\end{proposition}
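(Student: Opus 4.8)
\subsection*{Proof proposal}

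The plan is to follow the reduction indicated after the statement: since a pointwise supremum of convex functions is convex \citep[][\S 3.2.3]{BV04}, it suffices to show that $\varrho\mapsto f(\lambda,\varrho)$ is convex for each fixed $\lambda\in\Lambda$. So fix $\lambda\neq 0_K$ and set $s(\varrho)=\lambda^{T}\{t-\mu(\varrho)\}$ and $D(\varrho)=\lambda^{T}\Sigma(\varrho)\lambda$. First I would record the two structural facts that drive the argument. Because $\mu(\varrho)_{k}=\sum_{i,j}q_{ijk}\varrho_{ij}$ is linear in $\varrho$, the map $s$ is affine in $\varrho$. For the denominator, write $w_{ij}=\sum_{k}\lambda_{k}q_{ijk}$; substituting the displayed formula for $\Sigma(\varrho)_{k,\ell}$ and collecting terms yields $D(\varrho)=\sum_{i=1}^{I}\{\sum_{j=1}^{n_{i}}w_{ij}^{2}\varrho_{ij}-(\sum_{j=1}^{n_{i}}w_{ij}\varrho_{ij})^{2}\}$. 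Within each stratum the first sum is linear in $\varrho$ and the second is the negative of the square of a linear functional, hence concave; summing over strata shows $D$ is concave in $\varrho$. Moreover $D(\varrho)\geq 0$ for $\varrho$ in the product of per-stratum simplices (Cauchy--Schwarz within each stratum, or simply because $\Sigma(\varrho)$ is a covariance matrix), and I would restrict attention to $\{\varrho:D(\varrho)>0\}$, which is convex as a strict superlevel set of the concave function $D$; the degenerate boundary where $D(\varrho)=0$ can be absorbed by continuity or excluded as in the asymptotic development.

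Next I would rewrite $f$ in a form amenable to a composition/perspective argument. Since $\{\lambda^{T}\Sigma(\varrho)\lambda\}^{1/2}=D(\varrho)^{1/2}>0$, we have $\max[0,s(\varrho)/D(\varrho)^{1/2}]=s(\varrho)_{+}/D(\varrho)^{1/2}$, where $s(\varrho)_{+}=\max\{0,s(\varrho)\}$, and therefore $f(\lambda,\varrho)=s(\varrho)_{+}^{2}/D(\varrho)$. Now observe that the scalar function $\phi(u)=u_{+}^{2}=\max\{u,0\}^{2}$ is convex (it is the composition of the convex nonnegative map $u\mapsto\max\{u,0\}$ with $x\mapsto x^{2}$, which is convex and nondecreasing on $[0,\infty)$), so its perspective $g(u,v)=v\,\phi(u/v)=u_{+}^{2}/v$ is jointly convex on $\mathbb{R}\times(0,\infty)$ \citep[][\S 3.2.6]{BV04}; note also that $g$ is nonincreasing in $v$. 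Finally, $f(\lambda,\varrho)=g(s(\varrho),D(\varrho))$ is the composition of the convex function $g$ with the affine map $\varrho\mapsto s(\varrho)$ in the first slot and the concave map $\varrho\mapsto D(\varrho)$ in the second slot, where $g$ is nonincreasing; the vector composition rule \citep[][\S 3.2.4]{BV04} then gives that $\varrho\mapsto f(\lambda,\varrho)$ is convex on $\{\varrho:D(\varrho)>0\}$. Taking the supremum over $\lambda\in\Lambda$ completes the proof.

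I expect the main obstacle to be the concavity of $D(\varrho)=\lambda^{T}\Sigma(\varrho)\lambda$: one must carry out the bookkeeping that condenses the double sum over outcome indices into the per-stratum expression in $w_{ij}$ and then recognize the resulting ``linear minus square-of-linear'' shape. The secondary subtlety is purely one of bookkeeping in the composition step — making sure the $\max[0,\cdot]$ and the squaring are folded together into the single convex scalar $\phi(u)=u_{+}^{2}$ (so that the troublesome affine-over-square-root expression never appears on its own), and that the monotonicity direction used for the $D(\varrho)$ slot (nonincreasing paired with concave) is the correct one — together with the minor domain issue of working on $\{\varrho:D(\varrho)>0\}$ so that the quotient is well defined.
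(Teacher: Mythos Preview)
Your proposal is correct and follows essentially the same route as the paper: show $\lambda^{T}\Sigma(\varrho)\lambda$ is concave, recognize $f(\lambda,\varrho)=\phi_{+}(s(\varrho))^{2}/D(\varrho)$ as the perspective of the convex scalar $u\mapsto u_{+}^{2}$ composed with an affine/concave pair, invoke the composition rule from \citet{BV04}, and finish with the pointwise supremum. The only cosmetic difference is that the paper establishes concavity of $D(\varrho)$ by computing the block-diagonal Hessian $-\tfrac{1}{2}Q_{i}^{T}\lambda\lambda^{T}Q_{i}$ rather than your direct ``linear minus square-of-linear'' expansion in the $w_{ij}$; both arguments are equally short and yield the same conclusion.
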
 
	
	In order to show that \eqref{eq:mod} is convex in $\vect{\varrho}$ we first prove a lemma.
	
	\begin{lemma}\label{lem: concavity of denominator}
		For a fixed $\lambda \in \R^{K}$ the function $d(\vect{\varrho}) = \lambda^{T}\Sigma(\vect{\varrho})\lambda$ is a concave function of $\vect{\varrho}$.
	\end{lemma}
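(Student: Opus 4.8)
The plan is to reduce the bilinear form $d(\vect{\varrho}) = \lambda^{T}\Sigma(\vect{\varrho})\lambda$ to a sum of within-stratum variances, each of which is manifestly concave in the assignment probabilities. First I would introduce the scalar $Q_{ij} = \lambda^{T}q_{ij} = \sum_{k=1}^{K}\lambda_{k}q_{ijk}$, the projection of the $i,j$th vector of constants onto the weight vector $\lambda$. Expanding $\lambda^{T}\Sigma(\vect{\varrho})\lambda = \sum_{k,\ell}\lambda_{k}\lambda_{\ell}\Sigma(\vect{\varrho})_{k,\ell}$ and using the explicit form of $\Sigma(\vect{\varrho})_{k,\ell}$ given in the text, the double sum over $k$ and $\ell$ collapses and one obtains
\[
    d(\vect{\varrho}) = \sum_{i=1}^{I}\left\{\sum_{j=1}^{n_{i}}Q_{ij}^{2}\,\varrho_{ij} - \left(\sum_{j=1}^{n_{i}}Q_{ij}\,\varrho_{ij}\right)^{\!2}\right\}.
\]
Each summand over $i$ is exactly the variance of $Q_{ij}$ under the probability weights $\varrho_{i\cdot}$ on the $n_{i}$ units in stratum $i$, which makes the concavity structure transparent.

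Next I would argue concavity term by term. For fixed $i$, the map $\vect{\varrho}\mapsto \sum_{j}Q_{ij}^{2}\varrho_{ij}$ is affine in $\vect{\varrho}$, hence both convex and concave. The map $\vect{\varrho}\mapsto \sum_{j}Q_{ij}\varrho_{ij}$ is likewise affine, and since $t\mapsto -t^{2}$ is concave on $\R$, the composition $\vect{\varrho}\mapsto -\left(\sum_{j}Q_{ij}\varrho_{ij}\right)^{2}$ is concave as the composition of a concave function with an affine map. The sum of a concave function and an affine function is concave, so each bracketed term is concave in $\vect{\varrho}$; summing the finitely many stratum-level terms preserves concavity, giving the claim. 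Since $\lambda$ was arbitrary and fixed, this establishes the lemma.

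I do not anticipate a genuine obstacle here; the content is the algebraic identity that reorganizes $\lambda^{T}\Sigma(\vect{\varrho})\lambda$ into the displayed sum of stratum variances, and the only place to be careful is the bookkeeping when expanding the quadratic form in $\lambda$ and checking that the cross terms $\sum_{k,\ell}\lambda_{k}\lambda_{\ell}q_{ijk}q_{ij\ell}$ assemble into $Q_{ij}^{2}$ and $\bigl(\sum_{j}Q_{ij}\varrho_{ij}\bigr)^{2}$ respectively. Once that identity is in hand the concavity conclusions are immediate from standard composition rules for convex/concave functions \citep[][\S 3.2.2]{BV04}. This lemma then feeds directly into the proof of Proposition~\ref{prop: convexity of objective}: $f(\lambda,\vect{\varrho})$ is a nondecreasing convex (namely $\max[0,\cdot]^{2}$) transformation applied to $\lambda^{T}\{t-\mu(\vect{\varrho})\}$ divided by $d(\vect{\varrho})^{1/2}$, and the concavity of $d$ is what is needed to control the denominator and conclude convexity of $f(\lambda,\cdot)$ for each $\lambda$, after which the pointwise supremum over $\Lambda$ remains convex.
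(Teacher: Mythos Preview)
Your proof is correct. The paper establishes the same lemma by directly computing the Hessian of $d(\vect{\varrho})$ with respect to the within-stratum coordinates $(\varrho_{i1},\ldots,\varrho_{in_i})$, obtaining a negative constant times the rank-one matrix $Q_i^T\lambda\lambda^T Q_i$ (in the paper's notation, $Q_i$ is the $K\times n_i$ matrix with entries $q_{ijk}$), and then notes that the full Hessian is block-diagonal and hence negative semidefinite. Your argument reaches the same endpoint by a slightly more elementary route: you first make the stratum-level variance decomposition explicit, and then invoke the composition rule that $-t^2$ of an affine map is concave rather than differentiating. Both approaches hinge on the same structural fact---separability across strata and the quadratic dependence on $\varrho_{i\cdot}$ entering only through the square of an affine form---so they are close cousins; yours avoids matrix calculus while the paper's makes the rank-one negative-semidefinite structure of each Hessian block visible at a glance.
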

	\begin{proof}[Proof of Lemma~\ref{lem: concavity of denominator}]
		Define $Q_{i}$ to be the $K$-by-$n_{i}$ matrix where the $(k,j)$th entry is $q_{ijk}$.  Then the Hessian matrix of $d(\vect{\varrho})$ with respect to the variables in the $i$th strata is
		\begin{equation*}
		\nabla_{\varrho_{ij};\; (j = 1, \ldots, n_{i})}^{2} d(\vect{\varrho}) = \frac{-1}{2}Q_{i}^{T}\lambda\lambda^{T}Q_{i}.
		\end{equation*}
		This is negative semi-definite.  By independence between strata, the full Hessian $\nabla_{\vect{\varrho}}^{2} f(\vect{\varrho})$ is the direct sum of the Hessians associated to each stratum.  Thus, the full Hessian matrix is a block diagonal matrix wherein each block is negative semi-definite.  Since the eigenvalues of a block diagonal matrix are the collection of eigenvalues of its constituent blocks, we have that the full Hessian must be negative semi-definite as well.  As a consequence, $d(\vect{\varrho})$ is a concave function of $\vect{\varrho}$.
	\end{proof}
	
	\begin{proof}[Proof of Proposition~\ref{prop: convexity of objective}]
		The identity function $x \mapsto x$ is convex as a function of $x$. Since the point-wise maximum of convex functions is convex $\max\{0, x\}$ is convex as a function of $x$. The quadratic function $a \mapsto a^{2}$ is convex and increasing on the non-negative real line so by \citet[3.10]{BV04}  the function $\psi(x) = \left[\max\{0, x\}\right]^{2}$ is a convex function of $x$.
		
		The perspective of a function $\psi(x)$ is defined to be $\phi(x, v) = v\psi(x / v)$ for $v > 0$; by \citet[3.2.6]{BV04} the perspective of a convex function is convex as well.  Computing the perspective of $\psi$ follows as
		\begin{align*}
		\phi(x, v) &= v\psi(x /v)\\
		&= v\left\{\max(0, x/v)\right\}^{2}\\
		&= v\left\{\frac{\max(0, x)}{v}\right\}^{2}\\
		&= \frac{\max(0, x)^{2}}{v}.
		\end{align*}
		
		Thus, $\phi(x, v) = \max(0, x)^{2} / v$ is convex. Now, consider any fixed $\lambda \geq 0$ and $t$ of dimension $K$.  $\mu(\vect{\varrho})$ is a linear function of $\vect{\varrho}$.  Since affine transformations of linear functions are convex, $\lambda^{T}\{t - \mu(\vect{\varrho})\}$ is convex.  Furthermore, $\lambda^{T}\Sigma(\vect{\varrho})\lambda$ is concave in $\vect{\varrho}$ by Lemma~\ref{lem: concavity of denominator}.  By \citet[3.15]{BV04}, since $\phi(x,v)$ is non-decreasing in $x$ and non-increasing in $v$ the function		\begin{equation*}
		f(\lambda, \vect{\varrho}) = \phi\left(\lambda^{T}\{t - \mu(\vect{\varrho})\}, \lambda^{T}\Sigma(\vect{\varrho})\lambda\right) = \frac{\max\left[0, \lambda^{T}\left\{t - \mu(\vect{\varrho})\right\}\right]^{2}}{ \lambda^{T}\Sigma(\vect{\varrho})\lambda}
		\end{equation*}
		is convex in $\vect{\varrho}$. As $g(\vect{\varrho})$ is the point-wise supremum over all $\lambda \in \Lambda$ of $f(\lambda, \vect{\varrho})$, by \citet[3.7]{BV04} $g(\vect{\varrho})$ is convex in $\vect{\varrho}$ as desired.  The requirement that $\Lambda$ excludes the zero vector ensures that for any positive definite $\Sigma(\vect{\varrho})$ the denominator is always defined and thus $g(\vect{\varrho})$ is defined.
	\end{proof}

	\subsection{Proposition~\ref{prop:size}}
	Here and elsewhere in the supplement, $\Lambda_+$ is once again defined to be the non-negative orthant in $\mathbb{R}^K$ excluding the zero vector, that is $\Lambda_+ = \{\lambda: \lambda_k \geq 0\;\; (k=1...,K);\;\; \sum \lambda_k > 0\}$.
	\setcounter{proposition}{1}
	\begin{proposition}
		Suppose that the global null (\ref{eq:global}) of non-positive treatment effects is true and assume that the test statistics $T_k$ $(k=1,...,K)$ are effect increasing.  Then
		\begin{align*}
		\P\{A_{\Lambda_+}(Z,R_Z) \geq G^{-1}(1-\alpha, R_Z)\} \leq \alpha,
		\end{align*}
		such that the reference distribution under Fisher's sharp null controls the Type I error rate for any element of the composite null $H_0$. 
	\end{proposition}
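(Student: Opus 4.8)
The plan is to reduce the composite null $H_0$ of non-positive effects to Fisher's sharp null of no effect, against which the test is exact by construction. The base case is immediate: if Fisher's sharp null holds then $R_Z=r_C$ is a fixed response matrix, so $A_{\Lambda_+}(Z,R_Z)=A_{\Lambda_+}(Z,r_C)$ and $G(\cdot,R_Z)=G(\cdot,r_C)$ is exactly the randomization distribution of the deterministic map $b\mapsto A_{\Lambda_+}(b,r_C)$ under $\P(Z=b\mid\cF,\cZ)$, so that $\P\{A_{\Lambda_+}(Z,r_C)\geq G^{-1}(1-\alpha,r_C)\}\leq\alpha$ by the standard argument for randomization tests.

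The bridge from the true data-generating mechanism to this base case uses that each $T_k$ is effect increasing and that every $\lambda\in\Lambda_+$ has non-negative coordinates. Under any element of $H_0$ we have $r_{Tijk}\leq r_{Cijk}$, so $R_{Zijk}$ equals $r_{Cijk}$ on the control units and is no larger than $r_{Cijk}$ on the treated units; equivalently $(2Z_{ij}-1)(r_{Cijk}-R_{Zijk})\geq 0$ for every $i,j,k$ and every realization of $Z$. Since each $T_k$ is effect increasing this yields $T_k(Z,R_Z)\leq T_k(Z,r_C)$ coordinatewise, and because each $\lambda\in\Lambda_+$ is non-negative the operation of raising the treated units' observed responses up to their control potential outcomes can only enlarge the combined statistic, so one shows $A_{\Lambda_+}(Z,R_Z)\leq A_{\Lambda_+}(Z,r_C)$. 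The conclusion would then follow by coupling the composite-null experiment to the sharp-null experiment through a common $Z$, arguing that the event $\{A_{\Lambda_+}(Z,R_Z)\geq G^{-1}(1-\alpha,R_Z)\}$ is contained in $\{A_{\Lambda_+}(Z,r_C)\geq G^{-1}(1-\alpha,r_C)\}$, and invoking the base case; the accompanying claim of unbiasedness against alternatives $\tau_{ijk}\geq 0$ follows from the same comparison run in the opposite direction.

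The hard part, and the step I expect to consume most of the proof, is that the reference distribution $G(\cdot,R_Z)$ — together with the centering $\mu(\varrho^*)$ and scaling $\Sigma(\varrho^*)$ that enter $A_{\Lambda_+}$ through the response-derived constants $q_k$ — depends on the observed responses $R_Z$, which are themselves random under $H_0$; this is a feature absent from a univariate sensitivity analysis for a single outcome. One therefore cannot simply invoke stochastic dominance of the test statistic against a fixed critical value, and even the pointwise comparison $A_{\Lambda_+}(Z,R_Z)\leq A_{\Lambda_+}(Z,r_C)$ must be handled with care because the centering and scaling change on the two sides. I would follow the template of the univariate argument of Caughey, Dafoe and Miratrix: regard $G(\cdot,R_Z)$ as the exact null distribution of $A_{\Lambda_+}$ in the hypothetical finely stratified experiment in which the observed matrix $R_Z$ serves as the control potential outcomes and the sharp null holds, and then pass from that hypothetical experiment to the true one by lowering $r_{Tijk}$ from $r_{Cijk}$ to its true value one coordinate at a time, verifying that each such step cannot raise the rejection probability. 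The supremum over $\Lambda_+$ contributes nothing essential: because $\Lambda_+$ lies in the non-negative orthant, $A_{\Lambda_+}$ inherits the directional monotonicity in the responses that the univariate argument exploits, and the pointwise supremum of such functions retains it, so it is precisely the coherence constraint defining $\Lambda_+$ that makes the test valid for $H_0$.
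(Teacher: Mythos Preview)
Your route diverges from the paper's at the crucial step, and the difficulty you correctly flag---that passing from $R_Z$ to $r_C$ changes $\mu$, $\Sigma$, and the critical value simultaneously---is precisely the obstacle your interpolation idea does not overcome. You never establish that lowering $r_{Tijk}$ one coordinate at a time cannot raise the rejection probability, and the event containment $\{A_{\Lambda_+}(Z,R_Z)\geq G^{-1}(1-\alpha,R_Z)\}\subseteq\{A_{\Lambda_+}(Z,r_C)\geq G^{-1}(1-\alpha,r_C)\}$ need not hold, because the threshold can move in the opposite direction to the test statistic.

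The paper never compares to $r_C$ at all. Instead it introduces an independent copy of the assignment by multiplying the rejection probability by $1=\sum_{b\in\Omega}\P(Z=b\mid\mathcal{F},\mathcal{Z})$, and then, invoking the effect-increasing property, replaces the indicator $1\{A_{\Lambda_+}(z,R_z)>G^{-1}(1-\alpha,R_z)\}$ by $1\{A_{\Lambda_+}(b,R_z)>G^{-1}(1-\alpha,R_z)\}$ inside the resulting double sum over $(b,z)$. Because the second argument $R_z$ is held fixed across this replacement, the constants $q_k(R_z)$, the centering $\mu(\varrho^*)$, the scaling $\Sigma(\varrho^*)$, and the threshold $G^{-1}(1-\alpha,R_z)$ are identical on both sides, so none of your anticipated complications arise. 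After exchanging the order of summation, the inner sum over $b$ is at most $\alpha$ for every fixed $z$ by the very definition of $G^{-1}(1-\alpha,R_z)$, and the bound follows in one line. The device of swapping only the \emph{first} argument of $A_{\Lambda_+}$ while freezing the observed responses is what makes the argument short; your reduction to $r_C$ trades this away and then has to recover it through a monotonicity-in-$R$ claim that you leave unproved.
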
 
	\begin{proof}
		
		\begin{align*}
		&\P\{A_{\Lambda_+}(Z,R_Z) > G^{-1}(1-\alpha, R_Z)\}\\  &=\sum_{z\in \Omega}1\{A_{\Lambda_+}(z, R_z) > G^{-1}(1-\alpha, R_z) \}\P(Z = z\mid \cF, \cZ)\\
		&=\sum_{b\in \Omega}\left[\sum_{z\in \Omega}1\{A_{\Lambda_+}(z, R_z) > G^{-1}(1-\alpha, R_z) \}\P(Z = z\mid \cF, \cZ)\right]\P(Z=b\mid \cF, \cZ)\\
		& \leq \sum_{b\in \Omega}\left[\sum_{z\in \Omega}1\{A_{\Lambda_+}(b, R_z) > G^{-1}(1-\alpha, R_z) \}\P(Z = z\mid \cF, \cZ)\right]\P(Z=b\mid \cF, \cZ)\\
		& =\sum_{z\in \Omega}\left[\sum_{b\in \Omega}1\{A_{\Lambda_+}(b, R_z) > G^{-1}(1-\alpha, R_z) \}\P(Z = b\mid \cF, \cZ)\right]\P(Z=z\mid \cF, \cZ)\\
		& \leq \alpha \sum_{z\in \Omega}\P(Z=z\mid \cF, \cZ) = \alpha.
		\end{align*}
		
		The third line simply multiplies by one in the form of $\sum_{b\in\Omega} \P(Z=b\mid\cF, \cZ)$. The fourth line uses that the test statistics are effect increasing. After rearranging the order of summation in the fifth line, the sixth follows by definition as it simply uses that for any particular $z$, $G^{-1}(1-\alpha, R_z)$ is the $1-\alpha$ quantile corresponding to $G(v, R_z) = \sum_{b\in \Omega}1\{A_{\Lambda_+}(b, R_z) \leq v\}\P(Z = b\mid \cF, \cZ)$.

	\end{proof}
	
	\subsection{Theorem~\ref{thm: asymptotic chi bar sqared}}
	For ease of notation we suppress conditioning on $\cF$ and $\cZ$ when writing expectations and covariances in this section. We again define $T_{k} = \sum_{i = 1}^{I}\sum_{j = 1}^{n_{i}}Z_{ij}q_{ijk}$, and let $\tilde{\varrho}$ represent the true vector of conditional assignment probabilities. For precision quantities such as $\tilde{\varrho}$ should be subscripted by $I$ to denote their dependence on the sample size; this is omitted for improved readibility. 
	
	\begin{theorem}
		Suppose that $I^{-1}\Sigma(\tilde{\varrho})$ has a positive definite limit $M$ as $I\rightarrow \infty$ and the random vector $\Sigma(\vect{\tilde{\varrho}})^{-1/2}\left\{T - \mu(\vect{\tilde{\varrho}})\right\}$ converges in distribution to a $K$-dimensional vector of independent standard normals.  Then, as $I \rightarrow \infty$ the random variable $A_{\Lambda_+}^2$ converges in distribution to a $\ChiBarSq(M^{-1}, \Lambda_+)$ random variable under Fisher's sharp null. 
	\end{theorem}
	
	Before proving Theorem~\ref{thm: asymptotic chi bar sqared}, we establish conditions under which the random vector $\Sigma(\tilde{\varrho})^{-1/2}\left\{T - \mu(\tilde{\varrho})\right\}$ has a multivariate normal limiting distribution.
	
	\begin{lemma}\label{lem: normal distrib}
		Suppose that there exists a $\delta > 0$ for which 
		\begin{equation}\label{eqn: normal distrib criterion}
		\sum_{i = 1}^{I}\Expectation{\mid \sum_{j = 1}^{n_{i}}q_{ijk}Z_{ij} - \sum_{j = 1}^{n_{i}}q_{ijk}\varrho_{ij}^{*} \mid ^{2 + \delta} } = O(I)
		\end{equation}
		for all $k$ and all $I$, and that $I^{-1}\Sigma(\tilde{\varrho})$ has an positive definite limit $M$ as $I\rightarrow \infty$.  Then as $I \rightarrow \infty$ the random vector $\Sigma(\vect{\tilde{\varrho}})^{-1/2}\left\{T - \mu(\vect{\tilde{\varrho}})\right\}$ converges in distribution to a $K$-variate vector of independent standard normals. 
	\end{lemma}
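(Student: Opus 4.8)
The plan is to establish the limit via the Lyapunov central limit theorem for triangular arrays together with the Cram\'er--Wold device, after first separating the sample-size-dependent standardizing matrix from the random vector.

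Write $T - \mu(\varrho^*) = \sum_{i=1}^{I} W_i$, where $W_i = (W_{i1}, \dots, W_{iK})^T$ has coordinates $W_{ik} = \sum_{j=1}^{n_i} q_{ijk}(Z_{ij} - \varrho_{ij}^*)$. Since the strata are mutually independent, the $W_i$ are independent, mean-zero random vectors whose covariance matrices sum to $\Sigma(\varrho^*)$. First I would record that, because $I^{-1}\Sigma(\varrho^*) \to M$ with $M$ positive definite, the smallest eigenvalue of $\Sigma(\varrho^*)$ grows linearly in $I$ and the map $A \mapsto A^{-1/2}$ is continuous at $M$ on the cone of positive definite matrices, so $\{I^{-1}\Sigma(\varrho^*)\}^{-1/2} \to M^{-1/2}$. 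Writing
\[
\Sigma(\varrho^*)^{-1/2}\{T - \mu(\varrho^*)\} = \{I^{-1}\Sigma(\varrho^*)\}^{-1/2}\,\Bigl( I^{-1/2}\sum_{i=1}^{I} W_i \Bigr),
\]
Slutsky's theorem reduces the claim to showing that $I^{-1/2}\sum_{i=1}^{I} W_i$ converges in distribution to a mean-zero normal vector with covariance $M$; the limit of the product is then $M^{-1/2}$ times such a vector, which is distributed as $K$ independent standard normals.

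Next I would apply the Cram\'er--Wold device: it suffices to prove that for each fixed $\lambda \in \mathbb{R}^{K}$ the scalar $S_I := \sum_{i=1}^{I} I^{-1/2}\lambda^T W_i$ converges in distribution to $N(0, \lambda^T M \lambda)$. The summands are independent and mean zero with total variance $\lambda^T \{I^{-1}\Sigma(\varrho^*)\} \lambda \to \lambda^T M \lambda$, so only the Lyapunov condition remains, and I would verify it with the same $\delta$ appearing in \eqref{eqn: normal distrib criterion}. Using $|\lambda^T W_i| \le \sum_{k} |\lambda_k|\,|W_{ik}|$ and the elementary inequality $\bigl(\sum_{k=1}^{K} a_k\bigr)^{2+\delta} \le K^{1+\delta}\sum_{k=1}^{K} a_k^{2+\delta}$ for non-negative $a_k$, one obtains $\mathbb{E}|\lambda^T W_i|^{2+\delta} \le K^{1+\delta}\sum_{k=1}^{K} |\lambda_k|^{2+\delta}\,\mathbb{E}|W_{ik}|^{2+\delta}$; summing over $i$ and invoking \eqref{eqn: normal distrib criterion} yields $\sum_{i=1}^{I}\mathbb{E}|\lambda^T W_i|^{2+\delta} = O(I)$. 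Hence $\sum_{i=1}^{I}\mathbb{E}\bigl|I^{-1/2}\lambda^T W_i\bigr|^{2+\delta} = I^{-(2+\delta)/2}\,O(I) = O(I^{-\delta/2}) \to 0$, and the Lyapunov CLT gives $S_I \to N(0, \lambda^T M \lambda)$ in distribution, completing the proof.

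I expect the one genuinely delicate point to be the bookkeeping that isolates the $I$-dependent matrix $\{I^{-1}\Sigma(\varrho^*)\}^{-1/2}$ and legitimizes the use of Slutsky's theorem; this is precisely where the positive-definite-limit hypothesis is used, both to keep $\lambda_{\min}\{\Sigma(\varrho^*)\}$ of order $I$ and to guarantee continuity of $A \mapsto A^{-1/2}$ at $M$. Everything else is a standard triangular-array Lyapunov CLT computation, requiring no properties of the finely stratified design beyond independence across strata and the moment bound \eqref{eqn: normal distrib criterion}.
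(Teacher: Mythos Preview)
Your proposal is correct and follows essentially the same route as the paper: reduce to a univariate statement via Cram\'er--Wold, verify Lyapunov's condition using the moment hypothesis \eqref{eqn: normal distrib criterion} together with the $c_r$-inequality (your elementary power-sum inequality), and invoke the positive-definite-limit assumption to control the variance. The only cosmetic difference is that you explicitly factor out $\{I^{-1}\Sigma(\varrho^*)\}^{-1/2}$ and close with Slutsky, whereas the paper standardizes directly by $\{\lambda^T\Sigma(\varrho^*)\lambda\}^{1/2}$ inside the Lyapunov CLT; the substance is the same.
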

	
	\begin{proof}[Proof of Lemma~\ref{lem: normal distrib}]
		Define $X_{i} = (X_{i1}, \ldots, X_{iK})^{T}$ where $X_{ik} = \sum_{j = 1}^{n_{i}}q_{ijk}Z_{ij}$.   Denote $\mu_{i}(\tilde{\varrho}) = \Expectation{X_{i}}$ and $\Sigma_i(\tilde{\varrho}) = E(X_iX_i^T) - E(X_i)E(X_i)^T$, such that $\sum_{i=1}^I\mu_i(\tilde{\varrho}) = \mu(\tilde{\varrho}) = E(T)$ and $\sum_{i=1}^I\Sigma_i(\tilde{\varrho}) = \Sigma(\tilde{\varrho}) = \text{cov}(T)$.
		
		By the Cram\'{e}r-Wold device it suffices to consider the distribution of the univariate random variable $I^{-1/2}\sum_{i = 1}^{I}\lambda^{T}\{X_{i} - \mu_{i}(\tilde{\varrho})\}$ for a fixed, non-zero, $\lambda \in \R^{K}$.  By independence between strata, the random variables $\lambda^{T}\{X_{i} - \mu_{i}(\tilde{\varrho})\}$ are independent but not necessarily identically distributed. The variance of $I^{-1/2}\sum_{i = 1}^{I}\lambda^{T}\{X_{i} - \mu_{i}(\tilde{\varrho})\}$ is $I^{-1}\sum_{i=1}^I\lambda^{T}\Sigma_i(\tilde{\varrho})\lambda$. By hypothesis $I^{-1}\Sigma(\tilde{\varrho})$ has an positive definite limit $M$ as $I\rightarrow \infty$ so
		\begin{equation}\label{eqn: lyapunov condition part 1}
		\lim\limits_{I \rightarrow \infty}\frac{1}{\left(I^{-1}\sum_{i = 1}^{I}\lambda^{T}\Sigma_{i}(\tilde{\varrho})\lambda \right)^{\frac{2 + \delta}{2}}} = \frac{1}{\left(\lambda^{T}M\lambda \right)^{\frac{2 + \delta}{2}}} > 0.
		\end{equation}
		
		Furthermore, \eqref{eqn: normal distrib criterion} and the $c_{r}$-inequality imply that
		\begin{equation}\label{eqn: lyapunov condition part 2}
		\lim\limits_{I \rightarrow \infty}I^{-\frac{2  +\delta}{2}} \sum_{i = 1}^{I}\Expectation{\mid \sum_{j = 1}^{i}q_{ijk}Z_{ij} - \sum_{j = 1}^{i}q_{ijk}\varrho_{ij} \mid ^{2 + \delta} } = 0.
		\end{equation}
		
		Combining \eqref{eqn: lyapunov condition part 1} and \eqref{eqn: lyapunov condition part 2} gives that
		\begin{equation*}
		\lim\limits_{I \rightarrow \infty}\frac{1}{\left(\sum_{i = 1}^{I}\lambda^{T}\Sigma_{i}(\tilde{\varrho})\lambda \right)^{\frac{2 + \delta}{2}}} \sum_{i = 1}^{I}\Expectation{\mid \sum_{j = 1}^{i}q_{ijk}Z_{ij} - \sum_{j = 1}^{i}q_{ijk}\varrho_{ij} \mid ^{2 + \delta} } = 0.
		\end{equation*}
		
		The Lyapunov central limit theorem then implies that
		\begin{equation*}
		\frac{\sum_{i = 1}^{I}\lambda^{T}\{X_{i} - \mu_{i}(\tilde{\varrho})\}}{\left\{\sum_{i = 1}^{I}\lambda^{T}\Sigma_{i}(\tilde{\varrho})\lambda \right\}^{1/2}} 
		\end{equation*}
		converges in distribution to the standard univariate normal.  Hence, the Cram\'{e}r-Wold device establishes that $\Sigma(\vect{\tilde{\varrho}})^{-1/2}\left\{T - \mu(\vect{\tilde{\varrho}})\right\}$ converges in distribution to a $K$-variate vector of independent standard normals. 
	\end{proof}
	
	The sufficient criterion given in the main text, that $I^{-1}\sum_{i = 1}^{I}\sum_{j = 1}^{n_{i}}q_{ijk}^{4}$ is uniformly bounded for all $I$ and all $k = 1, \ldots, K$, satisfies the conditions of Lemma~\ref{lem: normal distrib} with $\delta = 2$ since $Z_{ij}$ is binary and $0 \leq \tilde{\varrho}_{ij} \leq 1$ for all $i$ and $j$. 
	
	For many statistics, such as an $m$-statistic using Huber's $\psi$ function, $q_{ijk}$ are bounded for all $i, j$ and $k$. In these cases, asymptotic normality would hold if the stratum sizes $n_{i}$ were bounded, for instance. When the underlying $q_{ijk}$ varies as a function of $I$ as with various rank tests, the proof given above is insufficient.  In such cases, a triangular array version of the central limit theorem must be applied and the sufficient conditions adapted accordingly to guarantee asymptotic normality as $I \rightarrow \infty$.

	\begin{proof}[Proof of Theorem~\ref{thm: asymptotic chi bar sqared}]
		Consider the random variable
		\begin{equation}\label{eqn: dual solution to chibarsq2}
		D_{\Lambda_+}^2 = h^{T}\Sigma(\vect{\tilde{\varrho}})h - \inf\limits_{\lambda \in \Lambda_+}(h - \lambda)^{T}\Sigma(\vect{\tilde{\varrho}})(h - \lambda),
		\end{equation}
		where $h = \Sigma(\vect{\tilde{\varrho}})^{-1}\{T - \mu(\vect{\tilde{\varrho}})\}$.  Assume no degeneracy between the test statistics, such that the covariance matrix $\Sigma(\vect{\tilde{\varrho}})$ is positive definite for all $I$. For $\Sigma(\vect{\tilde{\varrho}})$ positive definite, the program
		\begin{align}
		\inf\limits_{\lambda \in \Lambda_+}(h - \lambda)^{T}\Sigma(\vect{\tilde{\varrho}})(h - \lambda) \label{eq:quadratic}
		\end{align} 
		is convex.  Since the feasible region of \eqref{eq:quadratic} is $\Lambda_{+}$, the relative interior of the feasible region is non-empty \citep[\S 2.1.3]{BV04} and Slater's condition holds \citep[\S 5.2.3]{BV04}.  Consequently, there is no duality gap and the Karush-Kuhn-Tucker conditions are both necessary and sufficient for optimality \citep[\S 5.5.3]{BV04}.  As the objective function of \eqref{eq:quadratic} is a quadratic form, it is a smooth function of the arguments $h$, $\lambda$, and $\Sigma(\tilde{\varrho})$.  Thus, the Karush-Kuhn-Tucker conditions stipulate that an optimal $\lambda$ is the root of continuous functions of $h$ and $\Sigma(\tilde{\varrho})$.  Since the solutions to the Karush-Kuhn-Tucker conditions are continuous functions of $h$ and $\Sigma(\tilde{\varrho})$, the optima of \eqref{eq:quadratic} are continuous functions of $h$ and $\Sigma(\tilde{\varrho})$.


		\citet{S03} uses that strong duality holds for (\ref{eqn: dual solution to chibarsq2}) to give rise to the identity
		\begin{equation}\label{eqn: shapiro duality trick}
		D^2_{\Lambda_+} = \sup\limits_{\lambda \in \Lambda_+} \frac{\left[\lambda^{T}\{T - \mu(\vect{\tilde{\varrho}})\}\right]^{2} }{\lambda^{T}\Sigma(\vect{\tilde{\varrho}})\lambda}.
		\end{equation} From this, it is seen by the definition of $A^2_{\Lambda_+}$ in (\ref{eq:A}) of the main text that $D^2_{\Lambda_+} = A^2_{\Lambda_+}$.

		\citet{S03} shows that if $Y$ has a multivariate normal distribution with mean vector $\theta$ and known non-singular covariance matrix $V$ then 
		\begin{equation}\label{eqn: Shapiro result}
		\sup\limits_{\lambda \in \Lambda_+} \frac{\left\{\lambda^{T}(Y - \theta)\right\}^{2}}{\lambda^{T}V\lambda} \sim \ChiBarSq(\V^{-1}, \Lambda_{+}).
		\end{equation}
		Since $I^{-1}\Sigma(\tilde{\varrho}) \rightarrow M$ as $I \rightarrow \infty$, it follows that $I^{1/2}\Sigma(\tilde{\varrho})^{-1/2} \rightarrow M^{-1/2}$.  By Lemma~\ref{lem: normal distrib} the random vector $\Sigma(\vect{\tilde{\varrho}})^{-1/2}\left\{T - \mu(\vect{\tilde{\varrho}})\right\}$ converges in distribution to a $K$-variate vector of independent standard normals. By Slutsky's Lemma $I^{1/2}h$ converges in distribution to the mean-zero multivariate normal distribution with covariance $M^{-1}$. By continuity of the function taking $h$ to the optima of \eqref{eq:quadratic} along with (\ref{eqn: shapiro duality trick}), the mapping 
		\begin{equation*}
		\Sigma(\vect{\tilde{\varrho}})^{-1/2}\left\{T - \mu(\vect{\tilde{\varrho}})\right\} \mapsto \sup\limits_{\lambda \in \Lambda_+} \frac{\left\{\lambda^{T}(T - \mu(\vect{\tilde{\varrho}}))\right\}^{2} }{\lambda^{T}\Sigma(\vect{\tilde{\varrho}})\lambda}
		\end{equation*}
		is continuous.  
		Exploiting Slutsky's Lemma, the Continuous Mapping Theorem, and \eqref{eqn: Shapiro result} yields that $A^2_{\Lambda_+}$ converges in distribution to a $\ChiBarSq(M^{-1}, \Lambda_{+})$ random variable as desired.   
	\end{proof}

	\subsection{Theorem~\ref{thm: nested lambda design sensitivity}}
	\begin{theorem}
		Suppose $\Lambda_{1} \subseteq \Lambda_{2}$. Under mild conditions, the design sensitivity of \eqref{eq:twoplayer} using $\Lambda = \Lambda_{1}$ is less than or equal to the design sensitivity of \eqref{eq:twoplayer} using $\Lambda = \Lambda_{2}$.
	\end{theorem}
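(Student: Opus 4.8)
The plan is to reduce the comparison of design sensitivities to two ingredients: a pointwise monotonicity of the game value (\ref{eq:twoplayer}) in the set $\Lambda$, and a uniform bound on the critical value. Everything else is the standard design-sensitivity machinery, which is what the phrase ``under mild conditions'' is there to supply.

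Write $\Phi_\Lambda(\varrho)$ for the inner supremum $\sup_{\lambda\in\Lambda}\lambda^{T}\{t-\mu(\varrho)\}/\{\lambda^{T}\Sigma(\varrho)\lambda\}^{1/2}$ appearing in (\ref{eq:twoplayer}), so that $a^*_{\Gamma,\Lambda}=\min_{\varrho\in\cP_\Gamma}\Phi_\Lambda(\varrho)$. The decisive observation is that $a^*_{\Gamma,\Lambda_1}\le a^*_{\Gamma,\Lambda_2}$ whenever $\Lambda_1\subseteq\Lambda_2$, for every $I$ and every realization of the data: since $\Lambda_1\subseteq\Lambda_2$ we have $\Phi_{\Lambda_1}\le\Phi_{\Lambda_2}$ pointwise, and if $\varrho_2^\star$ minimizes $\Phi_{\Lambda_2}$ over $\cP_\Gamma$ then $a^*_{\Gamma,\Lambda_1}\le\Phi_{\Lambda_1}(\varrho_2^\star)\le\Phi_{\Lambda_2}(\varrho_2^\star)=a^*_{\Gamma,\Lambda_2}$. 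The same reasoning shows that the critical value $c_{\alpha,\Lambda}$ is non-decreasing in $\Lambda$, so enlarging $\Lambda$ raises both the statistic and the threshold it must clear; the theorem asserts that in large samples the first effect dominates. The bound I need on the threshold is that every valid critical value is at most $\{\chi^2_{K,1-\alpha}\}^{1/2}$, a constant independent of $I$: for any correlation structure $\ChiBarSq(V,\Lambda_+)\le\chi^2_K$ with probability one by (\ref{eqn: the chi-bar-sq random variable}), since the infimum term there is nonnegative, and the two-stage critical value of \S\ref{sec:critical value}, being a maximum of $\ChiBarSq$ quantiles over a set of correlation matrices, inherits this bound; critical values for smaller $\Lambda$ (e.g.\ the standard-normal quantile when $\Lambda$ is a singleton) are smaller still.

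Next I invoke the description of design sensitivity that the mild conditions are designed to deliver: under the assumed generative model, for each $\Lambda$ there is a deterministic non-increasing function $\Gamma\mapsto\bar\kappa_\Lambda(\Gamma)$ such that $I^{-1/2}a^*_{\Gamma,\Lambda}$ converges in probability to $\bar\kappa_\Lambda(\Gamma)$, with $\designSens_\Lambda=\inf\{\Gamma:\bar\kappa_\Lambda(\Gamma)\le 0\}$; in particular, whenever $\Gamma<\designSens_\Lambda$ one has $a^*_{\Gamma,\Lambda}\to+\infty$ in probability. Fix any $\Gamma<\designSens_{\Lambda_1}$. Then $a^*_{\Gamma,\Lambda_2}\ge a^*_{\Gamma,\Lambda_1}\to+\infty$ in probability, while $c_{\alpha,\Lambda_2}\le\{\chi^2_{K,1-\alpha}\}^{1/2}$ is bounded, so $\P(a^*_{\Gamma,\Lambda_2}\ge c_{\alpha,\Lambda_2})\to 1$; that is, the $\Lambda_2$-test has power tending to one at $\Gamma$. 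Since this holds for every $\Gamma<\designSens_{\Lambda_1}$, we conclude $\designSens_{\Lambda_2}\ge\designSens_{\Lambda_1}$. Equivalently, dividing the pointwise inequality $a^*_{\Gamma,\Lambda_1}\le a^*_{\Gamma,\Lambda_2}$ by $I^{1/2}$ and letting $I\to\infty$ gives $\bar\kappa_{\Lambda_1}\le\bar\kappa_{\Lambda_2}$ pointwise, so the last sign change of $\bar\kappa_{\Lambda_2}$ occurs no earlier than that of $\bar\kappa_{\Lambda_1}$.

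The real work is not this reduction but verifying the mild conditions, i.e.\ that $I^{-1/2}a^*_{\Gamma,\Lambda}$ genuinely has a deterministic probabilistic limit exhibiting the clean changepoint structure. This calls for a law-of-large-numbers argument for $t/I$, $\mu(\varrho)/I$ and $\Sigma(\varrho)/I$ under the generative model, combined with continuity and compactness of the min--sup functional --- using scale invariance of the ratio in $\lambda$ to restrict the supremum to the unit sphere intersected with the closure of $\Lambda$, and the separable polyhedral form of $\cP_\Gamma$ to control the minimization --- so that the limit of the optimization equals the optimization of the limit. Granting that, the monotonicity in $\Lambda$ and the uniform boundedness of the critical value are the short, load-bearing steps.
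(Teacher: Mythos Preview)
Your proof is correct and takes essentially the same approach as the paper: establish the pointwise inequality $a^*_{\Gamma,\Lambda_1}\le a^*_{\Gamma,\Lambda_2}$ from $\Lambda_1\subseteq\Lambda_2$, then use the defining property of design sensitivity (that the test statistic diverges for $\Gamma<\designSens_{\Lambda_1}$) to conclude $\designSens_{\Lambda_2}\ge\designSens_{\Lambda_1}$. The paper's version is terser---it absorbs your uniform bound on the critical value into the phrase ``$\P(A_{\Gamma,\Lambda_1}\ge k)\to 1$ for any scalar $k$'' and omits your closing discussion of what the mild conditions entail---but the logical skeleton is the same.
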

	
	\begin{proof}
		Define $\designSens_{\Lambda}$ as the design sensitivity of the test using $a_{\Gamma, \Lambda}^{*}$ as a test statistic.  To avoid triviality, suppose that the design sensitivities $\designSens_{\Lambda_{1}}$ and $\designSens_{\Lambda_{2}}$ both exist; see \citet{ros04} and \citet{R13} for mild conditions for existence of the design sensitivity.  Let $A_{\Gamma, \Lambda_{i}}$ be the random variable giving rise to the observation ${a}^*_{\Gamma, \Lambda_i}$ in \eqref{eq:mod} for $i=1,2$, that is 
		\begin{equation*}
		A_{\Gamma, \Lambda_{i}}^{*} = \underset{\varrho\in\cP_\Gamma}{\min}\;\; \underset{\lambda \in \Lambda_{i}}{\sup}\;\; \frac{\lambda^T\{T-\mu(\varrho)\}}{\{\lambda^T\Sigma(\varrho)\lambda\}^{1/2}}.
		\end{equation*}
		Since $\Lambda_{1} \subseteq \Lambda_{2}$, for any $\varrho$ we have
		\begin{equation*}
		\underset{\lambda \in \Lambda_{1}}{\sup}\;\; \frac{\lambda^T\{T-\mu(\varrho)\}}{\{\lambda^T\Sigma(\varrho)\lambda\}^{1/2}} \leq \underset{\lambda \in \Lambda_{2}}{\sup}\;\; \frac{\lambda^T\{T-\mu(\varrho)\}}{\{\lambda^T\Sigma(\varrho)\lambda\}^{1/2}},
		\end{equation*}
		such that $A_{\Gamma, \Lambda_{1}}^{*} \leq  A_{\Gamma, \Lambda_{2}}^{*}$.  Consider any $\Gamma < \designSens_{\Lambda_{1}}$. By the definition of design sensitivity, for a sensitivity analysis conducted at $\Gamma$ we have that $\P(A_{\Gamma,\Lambda_1} \geq k\mid \cZ)$ tends to one as $I \rightarrow \infty$ for any scalar $k$.  Since $A_{\Gamma, \Lambda_{1}}^{*} \leq  A_{\Gamma, \Lambda_{2}}^{*}$, for any $\Gamma < \designSens_{\Lambda_{1}}$ the power of the test based upon $A_{\Gamma, \Lambda_{2}}^{*}$, $\P(A_{\Gamma,\Lambda_2} \geq k\mid \cZ)$, also tends to one as $I \rightarrow \infty$ for any $k$.  Thus, $\designSens_{\Lambda_{2}} \geq \designSens_{\Lambda_{1}}$ as desired.
		
	\end{proof}
	\section{Additional Simulations}
	\subsection{The general setup of the simulation studies}
	In this section we present additional simulation studies to further illustrate the results presented in the manuscript. All of the simulation studies are conducted with some number $I$ pairs, and some number $K$ outcome variables, equicorrelated with correlation controlled by a parameter $\rho$. For each outcome variable, the employed test statistic is $T_k = \sum_{i=1}^I\sign(Y_{ik})\min(|Y_{ik}|/s_k, 2.5)$, where $s_k$ is the median of $|Y_{ik}|$ $(i=1,..,I)$. This amounts to a choice of a $m$-statistic with Huber's $\psi$-function, as described in \citet{ros07}. 
	\subsection{Rejecting the global null with $I=1000$ pairs}
	In \S~\ref{sec: Power Simulations} of the main text the $\ChiBarSq$-test was compared to the equal-weight test and the test of \citet{FS16} with $I=300$ matched pairs.  To highlight the large-sample properties of the test, we include Figure~\ref{fig: Normal Huber power comparisons: Large Sample}. As $I$ increases, the power curves converge pointwise to step functions, evaluating to 1 if $\Gamma$ is below the design sensitivity and zero otherwise \citep{R13}.  This indicates that the gap between the equal-weight test and the  $\ChiBarSq$-test observed in the first row of Figure~\ref{fig: Normal Huber power comparisons} and Figure~\ref{fig: Normal Huber power comparisons: Large Sample} will shrink as $I$ increases, and will disappear in the limit.  This trend can be appraised visually by comparing the disparity observed in the first row of Figure~\ref{fig: Normal Huber power comparisons} in the manuscript where $I = 300$ to the first row of Figure~\ref{fig: Normal Huber power comparisons: Large Sample} where $I = 1000$.  As a consequence Theorem~\ref{thm: nested lambda design sensitivity} and of the pointwise convergence of the power curve to the indicator function of the event $\Gamma$ less than the design sensitivity, the power curve of the $\ChiBarSq$-test will converge to that of the most powerful test at any fixed $\Gamma$ among all coherent tests.
	\begin{figure}
		\centering
		\includegraphics[scale=.65]{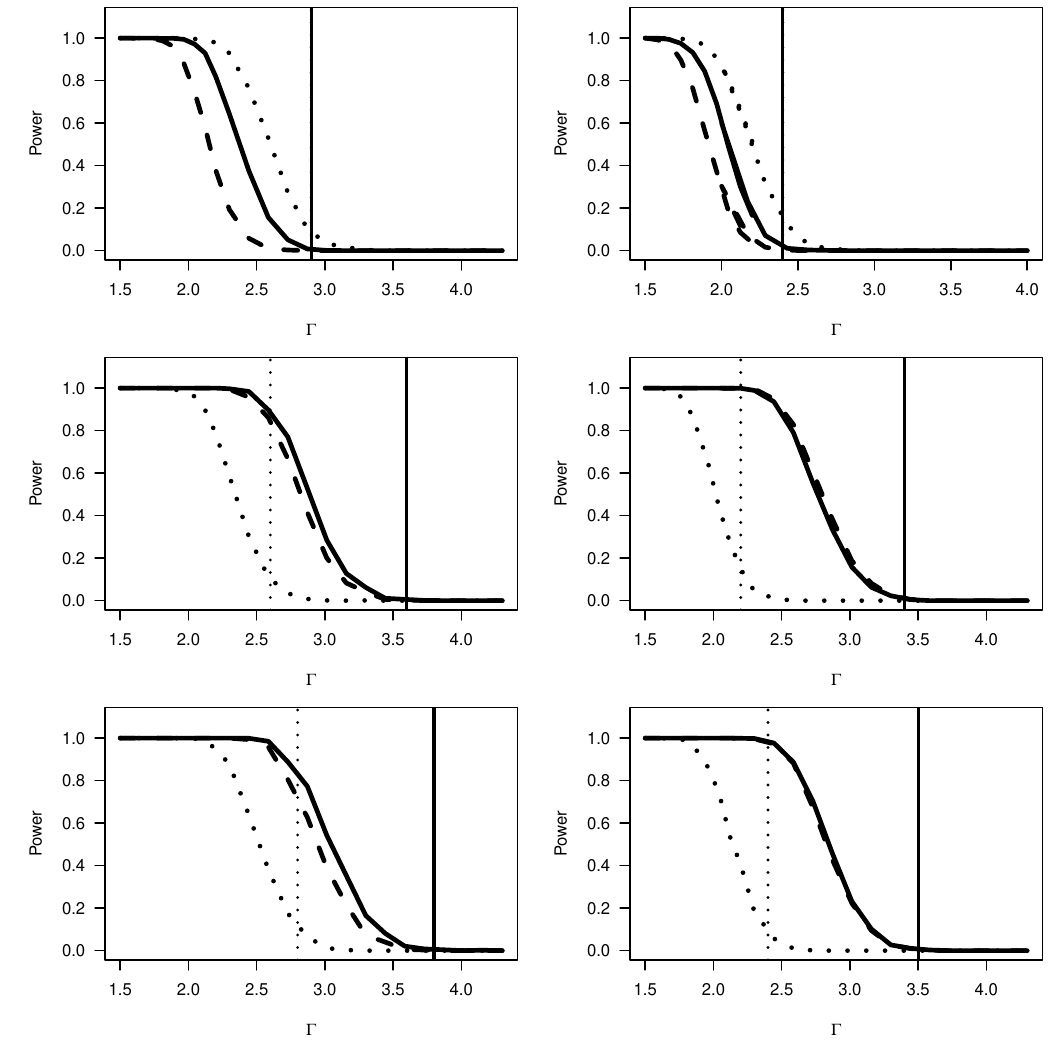}
		\caption{Power comparisons between the method of \citet{FS16} (dashed), the method of this paper (solid), and the equal-weight test (dotted) as $\Gamma$ increases with $I = 1000$.  The first row has $\tau_{1} = \tau_{2} = \tau_{3} = 0.25$.  The second row has $\tau_{1} = \tau_{2} =0.1$ and $\tau_{3} = 0.5$.  The third row has $\tau_{1} = 0.05$,  $\tau_{2} =0.2$, and $\tau_{3} = 0.5$.  Figures on the left have $\rho = 0$ while on the right $\rho = 0.2$. For each fixed set of parameters, power simulations were performed on 1000 simulated data sets.  The design sensitivity of the equal-weight test is the dotted vertical line and the design sensitivity of the $\ChiBarSq$-test is the solid vertical line.  In the first row, these two design sensitivities are the same and are shown by the single solid vertical line. }
		\label{fig: Normal Huber power comparisons: Large Sample}
	\end{figure}

	\subsection{Rejecting individual nulls through closed testing}

	An experimenter may want to test not only the global null hypothesis $H_{0}$ of \eqref{eq:global} but also the $K$ individual null hypotheses $H_{1}, \ldots, H_{K}$.  To achieve this at level $\alpha$, she may use a closed-testing framework \citep{mar76}.  Then, in order to test $H_{i}$ at level $\alpha$, she performs $\alpha$-level tests all hypotheses of the form $H_{i} \wedge \left(\bigwedge_{k \in S_{i}} \right)$ with $S_{i}$ the set of all possible subsets of the numbers $1, \ldots, K$ excluding $i$; she then rejects $H_{i}$ if all of these tests rejected.  Another standard method to test both the global null and each individual null would be to conduct a Bonferroni-corrected test of the global null and then use the results of the corrected individual tests to reject each $H_{k}$.  Figure~\ref{fig: Closed test vs Bonferroni vs Uncorrected} examines the performance of these two methods against the test of only $H_{1}$ when $\tau_{1} = 0.5$, $\tau_{2} = 0.2$, $\tau_{3} = 0.05$ and equicorrelation between the paired differences at at $\rho = 0.2$.  The comparison to the test of only $H_{1}$ is an unfair comparison in that testing only $H_{1}$ at level-$\alpha$ does not control the family-wise error rate at $\alpha$ when examining all $k = 1, \ldots, K$.  However, the test of $H_{1}$ alone at level-$\alpha$ achieves the highest power possible for any testing procedure that tests $H_{k}$ as it does not employ any corrections to control the family-wise error rate.  Thus, comparison to the test of $H_{1}$ alone at level-$\alpha$ serves as a comparison to an idealized benchmark, the absolute limit of statistical power that one may achieve when testing $H_{1}$ using a particular test statistic.
	
	\begin{figure}
		\centering
		\includegraphics[scale=.6]{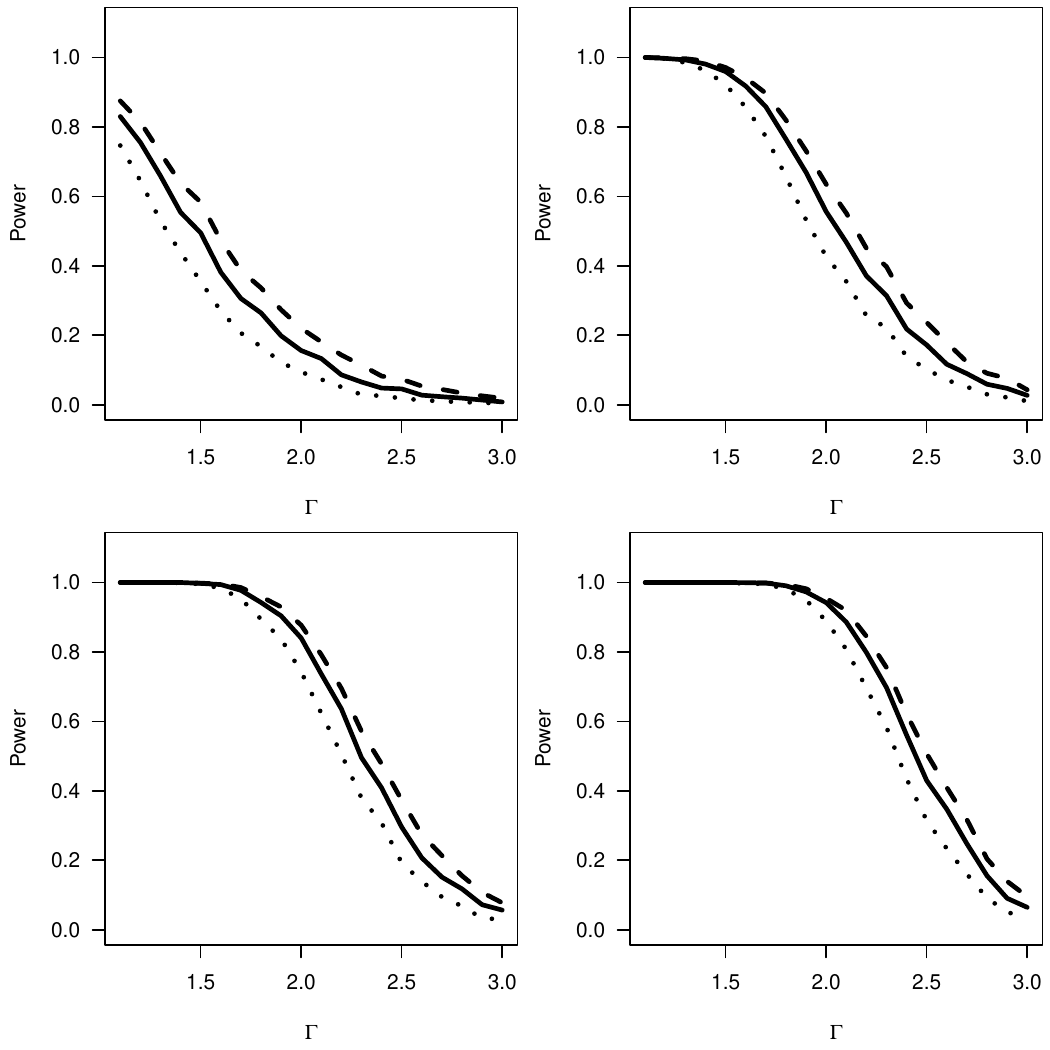}
		\caption{Power comparisons between embedding the $\ChiBarSq$-test into a closed testing framework (solid), performing a Bonferroni-corrected test (dotted), and performing an uncorrected test (dashed) as $I$ increases.  All simulations performed with $\tau_{1} = 0.5$, $\tau_{2} = 0.2$, $\tau_{3} = 0.05$ with equicorrelation at $\rho = 0.2$ testing $H_{1}$.  All data is with normal noise and tested with Huber's $\psi$-function as the underlying statistic.  Additional parameters listed clockwise from the top-left: $I = 50$, $I = 150$, $I = 250$, and $I = 350$.  For each sample size, power simulations were performed on 2000 simulated data sets.}
		\label{fig: Closed test vs Bonferroni vs Uncorrected}
	\end{figure}

	At all values of $I$ examined, the closed test outperforms the Bonferroni-corrected test.  Furthermore, as $I$ increases, the closed test approaches the same power as the individual test without correction.  Thus, for sufficiently large studies, empirical results suggest that a closed testing framework allows the experimenter to test both the global null and the individual null at level-$\alpha$ with minimal loss of power from multiple comparisons relative to testing only the individual null.
	
	\subsection{Type I error control in small samples using the asymptotic reference distribution}
	\begin{table}
		\centering
		\begin{tabular}{ll ccc}  $\tau_{2}$ & $\rho$ & \citet{FS16} & $\Lambda = \Lambda_{+}$ & $\Lambda = \R^{K} \setminus \{0_K\}$\\
			-0.5 & 0 & 0 & 0 & 0.694 \\  -0.25 & 0 & 0 & 0 & 0.454 \\  0 & 0 & 0.026 & 0.018 & 0.46 \\ -0.5 & 0.5 & 0 & 0 & 0.546 \\  -0.25 & 0.5 & 0 & 0 & 0.424 \\  0 & 0.5 & 0.018 & 0.016 & 0.496 \\ 
		\end{tabular}
		\caption{Type I error rates for the method of \citet{FS16}, the $\ChiBarSq$-test of this paper, and the test taking $\Lambda = \R^{K} \setminus \{0_K\}$ using $\alpha = 0.05$.  All tests performed with $I = 20$ matched pairs, $\tau_{1} = -0.5$, and $\Gamma = 1$.  For each set of parameters the power was estimated based upon 500 simulations.}
		\label{tab: type 1 error rate}
	\end{table}
	
	In this simulation, we assess the Type I error rate with $I=20$ matched pairs at $\Gamma=1$. In each simulation, the global null of non-positive treatment effects is true. Table~\ref{tab: type 1 error rate} details simulated Type I error rates for the method of \citet{FS16}, the $\ChiBarSq$-test of this paper, and the unconstrained test taking $\Lambda = \R^{K} \setminus \{0_K\}$ for $K=2$ outcome variables with a range of different parameter values.

	Both the method of \citet{FS16} and the $\ChiBarSq$-test control the Type I error rate at $\alpha$ even when in the finite sample regime while using the asymptotic reference distribution. As alluded to in \S \ref{sec: Nonnegative Orthant} the test taking $\Lambda = \R^{K} \setminus \{0_K\}$ fails to control the Type I error rate at $\alpha$ since allowing $\lambda$ to have an unconstrained sign in each coordinate removes the ability to discriminate positive treatment effects from negative treatment effects. This further motivates the restriction to the set of coherent combinations $\Lambda_+.$
	
	\subsection{Non-normal and larger $K$ simulations}
	We include several additional simulations, using a larger number of outcomes and experimenting with heavy-tailed noise in the data generating distribution.  In order to demonstrate the method's properties on studies with more outcomes, we conducted tests with $K = 4$.  Choosing $K = 4$ still allows for interpretable regimes of treatment effect relative magnitudes while expanding from the trivariate case.  We conducted tests under normality as in Section 5 of the manuscript as well as under non-normal conditions.
	
	To generate the normal $4$-variate data we followed the same procedure as outlined in Section 5.2 of the manuscript, but using four $\tau$'s and four $\varepsilon$'s.  To conduct the non-normal tests, we elected to experiment with heavy-tailed noise.  This was implemented via substituting $t_{5}$-distributed noise $(\varepsilon_{1}, \ldots, \varepsilon_{4})$ in place of normal noise in the procedure of Section 5.2.  This process mirrored the $t$-distributed simulation construction of \citet{ros16}.  We conducted tests for $\tau = (.1, .1, .1, .5), (.1, .1, .5 , .5),$ and $(.1, .25, .25, .5)$.  These treatment effect relative magnitudes were selected to highlight the strength of the $\ChiBarSq$-test when:
	\begin{itemize}
		\item One treatment effect is much larger than the others but none are of negligible magnitude (this is the case of $\tau = (.1, .1, .1, .5)$).
		\item There are several highly impacted outcomes, but there remain several outcomes for which treatment effect is small (this is the case of $\tau = (.1, .1, .5, .5)$).  This regime does not exist in the trivariate case.
		\item The treatment effects are spread across multiple magnitude scales; selecting all to be equally weighted is apt to perform poorly, but selecting only the largest is unlikely to perform as well as optimizing for weighting in accordance with their magnitudes (this is the case of $\tau = (.1, .25, .25, .5)$).
	\end{itemize}
	For all of the test considered, $I = 300$, $n_{i} = 2$ for all $i$, and $\rho = 0.2$. Figure~\ref{fig: Normal and T power comparisons} presents the results of these simulations (cf. Figure 1 of the manuscript).
	
	\begin{figure}[h]
		\centering
		\includegraphics[scale=.65]{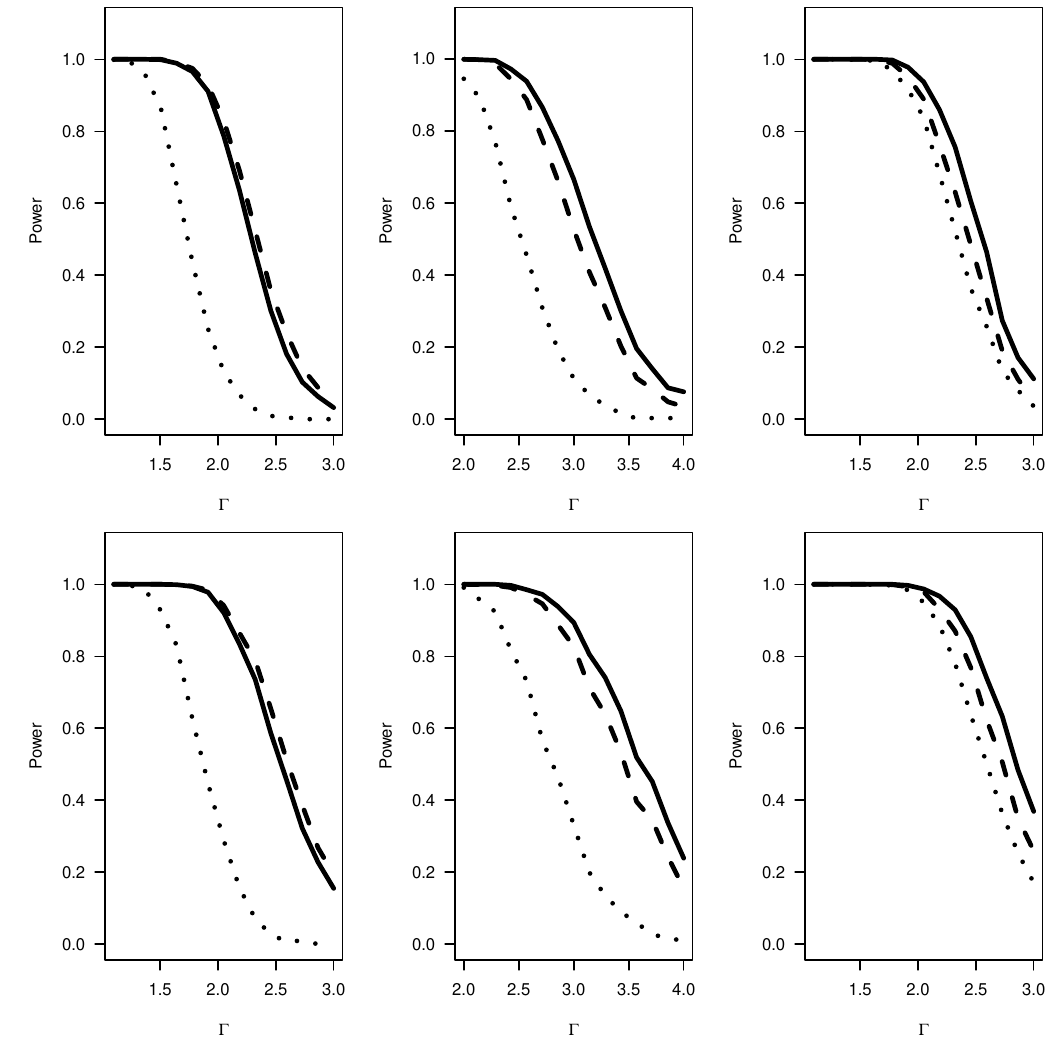}
		\caption{Power comparisons between the method of \citet{FS16} (dashed), the $\ChiBarSq$-test of this paper (solid), and the equal-weight test (dotted) as $\Gamma$ increases with $I = 300$. The first column has $\tau = (.1, .1, .1, .5)$; the second column has $\tau = (.1, .1, .5, .5)$; and the third column has $\tau = (.1, .25, .25, .5)$.  The top row is generated under the Gaussian data-generating process and the bottom row is generated under the $t_{5}$ data-generating process.}
		\label{fig: Normal and T power comparisons}
	\end{figure}
	
	Despite the heavy-tails, the relative performance of the $\ChiBarSq$-method stays the same as under the Gaussian data-generating process.  Moreover, the performance of the $\ChiBarSq$-test relative to the equal-weight test and the basis-vector test accords well with the intuition developed in Section 5 of the manuscript.
	\begin{itemize}
		\item In the first column, the equal-weight test is apt to under-perform due to the strong disparity in treatment effects across the four outcomes.  Since there is one ``stand-out" effect the lower critical value of the basis-vector test accounts for the slight increase performance edge over the $\ChiBarSq$-test.
		\item In the second column, the equal-weight test fares poorly for the same reasons as before.  Since there are several outcomes that are strongly impacted, the $\ChiBarSq$-test outperforms the basis-vector test which weights only one outcome.
		\item In the third column, the spread of treatment effect magnitudes across different regimes again accounts for the strong performance of the $\ChiBarSq$-statistic over the other two, as it flexibly weights each outcome in accordance with the degree of treatment effect.
	\end{itemize}

	
	\section{Algorithmic Details for Conducting the Sensitivity Analysis}\label{sec: Optimization Implementation Details}
	The optimization problem in \eqref{eq:mod} is solved via a projected subgradient descent algorithm.  \citet{Shor85} contains a detailed introduction to subgradient methods.  The algorithm begins with some initial feasible $\vect{\varrho}_{(0)}$, solves for an optimal $\lambda$ under the fixed $\vect{\varrho}_{(0)}$, computes a subgradient of the objective at the optimal $\lambda$, and uses the subgradient to project onto the feasible region thereby locating a $\vect{\varrho}_{(1)}$.  The procedure iterates until convergence criteria are satisfied.
	
	Formally, given a feasible $\varrho_{(n)}$ we compute
	\begin{equation}\label{eqn: lambda update formula}
	\lambda_{\varrho_{(n)}}^{*} = \sup\limits_{\lambda \in \Lambda_+} \frac{\left[\max\{0, \lambda^{T}(T - \mu(\varrho_{(n)}))\}\right]^{2}}{\lambda^{T}\Sigma(\varrho_{(n)})\lambda}.
	\end{equation}
	To compute \eqref{eqn: lambda update formula}, results from \citet{S03} are leveraged to allow efficient computation of
	\begin{equation*}
	\sup\limits_{\lambda \in \Lambda_+} \frac{\lambda^{T}\{T - \mu(\varrho_{(n)})\}}{\{\lambda^{T}\Sigma(\varrho_{(n)}\lambda\}^{1/2}},
	\end{equation*}
	by solving a single quadratic program.  In the event that 
	\begin{equation*}
	\sup\limits_{\lambda \in \Lambda_+} \frac{\lambda^{T}\{T - \mu(\varrho_{(n)})\}}{\{\lambda^{T}\Sigma(\varrho_{(n)})\lambda\}^{1/2}} > 0
	\end{equation*}
	$\lambda_{\varrho_{(n)}}^{*}$ is set to the optimizing choice of $\lambda$.  However, when 
	\begin{equation*}
	\sup\limits_{\lambda \in \Lambda_+} \frac{\lambda^{T}\{T - \mu(\varrho_{(n)})\}}{\{\lambda^{T}\Sigma(\varrho_{(n)})\lambda\}^{1/2}} \leq 0
	\end{equation*}
	there exists a feasible $\vect{\varrho}$ such that the test fails to reject the sharp null, and thus no further iterations of the subgradient method are needed.
	
	By \citet{HUL13}, if $f(x) = \sup_{j \in J}f_{j}(x)$ where each $f_{j}(x)$ is a convex function, $f(x) = f_{j^{*}}(x)$, and $g \in \partial f_{j^{*}}(x)$, then $g \in \partial f(x)$.  In less technical terms, to compute a subgradient of a function which is the point-wise supremum of a many convex functions, one first finds a function $f_{j^{*}}(\cdot)$ which achieves the maximum value at the desired point $x$ and then one computes a subgradient of this function.  As such, at the optimal value $\lambda^{*}$ one computes that the subgradient of the objective function with respect to the variables $\varrho_{i} = (\varrho_{i1}, \ldots, \varrho_{in_{i}})$ is
	
	\begin{equation*}\label{eqn: subgradient equation}
	g = \frac{h_{1}(\varrho)\partial_{\varrho_{i}}h_{2}(\varrho) - h_{2}(\varrho)\partial_{\varrho_{i}}h_{1}(\varrho)}{h_{2}(\varrho)^{2}},
	\end{equation*}
	where 
	\begin{align*}
	h_{1}(\varrho) &= \left(\lambda^{*T}\left(T - \mu(\varrho)\right)\right)^{2},\\
	h_{2}(\varrho) &= \lambda^{*T}\Sigma(\varrho)\lambda^{*},\\
	\partial_{\varrho_{i}}h_{1}(\varrho) &= -2(Q_{i}^{T}\lambda^{*})\lambda^{*T}(T - \mu(\varrho)),\\
	\partial_{\varrho_{i}}h_{2}(\varrho) &= (Q_{i}^{T}\lambda^{*}) \circ (Q_{i}^{T}\lambda^{*}) - 2(Q_{i}^{T}\lambda^{*})(Q_{i}^{T}\lambda^{*})^{T}\varrho_{i},
	\end{align*}
	where $Q_{i}$ is the $K$-by-$n_{i}$ matrix where the $(k,j)$th entry is $q_{ijk}$ and $\circ$ denotes the coordinate-wise product operation.
	
	Armed with the solution to the inner maximization and the form of the subgradient $g$, we can now detail the projected subgradient descent method.
	
	\begin{enumerate}
		\item Initialize a feasible $\rho_{(0)}$, pick $t_{0} > 0$ and $n = 1$
		\item Repeat until convergence:
		\begin{enumerate}
			\item Find $\lambda_{\rho_{(n - 1)}}$ by solving \eqref{eqn: lambda update formula},
			\item Compute the subgradient $g$ from \eqref{eqn: subgradient equation} using $\lambda_{\rho_{(n - 1)}}$,
			\item Define $\varrho_{(n)}$ to be the projection of $\rho_{(n - 1)} - t_{n - 1}g$ onto the feasible region,
			\item Update the parameters: $t_{n} = t_{0} / \sqrt{n}$ and $n = n + 1$.
		\end{enumerate}
	\end{enumerate}
	
	Since the objective function is convex and the feasible set is also convex, any local optimum is a global optimum as well.  In practical execution on both synthetic and real data sets convergence has been observed after few iterations.
	
	\section{The $\ChiBarSq$ Distribution}
	\subsection{Finding a better critical value}
	While the subgradient method solves \eqref{eq:mod} and Theorem~\ref{thm: asymptotic chi bar sqared} gives that the asymptotic distribution of $A_{\Lambda_{+}}^{2}$ is $\ChiBarSq$, the weights of the limiting distribution are still unknown.  Comparing the value of \eqref{eq:mod} against the $1 - \alpha$ quantile arising from the bound 
	\begin{align} \P\{\ChiBarSq(V, \Lambda_+) \geq c\} \leq 0.5 \{\Prob{\chi_{K - 1}^{2} \geq c} + \Prob{\chi_{K}^{2} \geq c}\}\label{eq:naive}\end{align} would control the Type I error. While improving over a critical value based on a $\chi^2_K$ distribution, the bounds through (\ref{eq:naive}) are still unduly conservative. We now describe an algorithm which exploits the particular structure of the sensitivity analysis problem to dramatically improve the critical value.
	
	By directly computing upper and lower bounds on the correlation between $T_{k}$ and $T_{\ell}$ for each $k, \ell = 1, \ldots, K$ one can compute coordinate-wise upper and lower bounds on on the overall correlation matrix $\diag\{\Sigma(\vect{\varrho})\}^{-1/2}\Sigma(\vect{\varrho})\diag\{\Sigma(\vect{\varrho})\}^{-1/2}$, where $\diag\{\Sigma(\vect{\varrho})\}$ contains the diagonal elements of $\Sigma(\vect{\varrho})$ on its diagonals but has zeroes on its off-diagonals.  Since the weights of the $\ChiBarSq$ distribution depend on $\Sigma(\vect{\varrho})$ only through its correlation matrix \citep{SS05}, one can directly optimize over bounds on the marginal correlations to find the most conservative $1 - \alpha$ critical value associated to a correlation matrix within the bounds. This optimization can be performed via either numerical approximation of gradients or by directly computing gradients of the $p$-value function with respect to the correlations.  Such gradients are accessible due to Plackett's identity \citep{P54} and can be calculated with assistance of functions in the \texttt{mvtnorm} package within \texttt{R} for evaluating orthant probabilities and the density of the multivariate normal. Optimizing over the space of correlation matrices yields significant improvement over the critical value drawn from previous bound.  Figure~\ref{fig: Critical Value Gap} highlights the differences between using the $1-\alpha$ quantile from a $\chi^{2}$ distribution, the $1-\alpha$ quantile from the naive bound based upon (\ref{eq:naive}), and using the optimal $1-\alpha$ quantile with $K=3$ outcome variables. By using the most conservative $1 - \alpha$ quantile within the upper and lower bounds on the correlation matrix the Type I error rate is asymptotically controlled at $\alpha$.  
	
	\begin{figure}[h]
		\centering
		\begin{subfigure}[t]{0.45\textwidth}
			\centering
			\includegraphics[width = \linewidth]{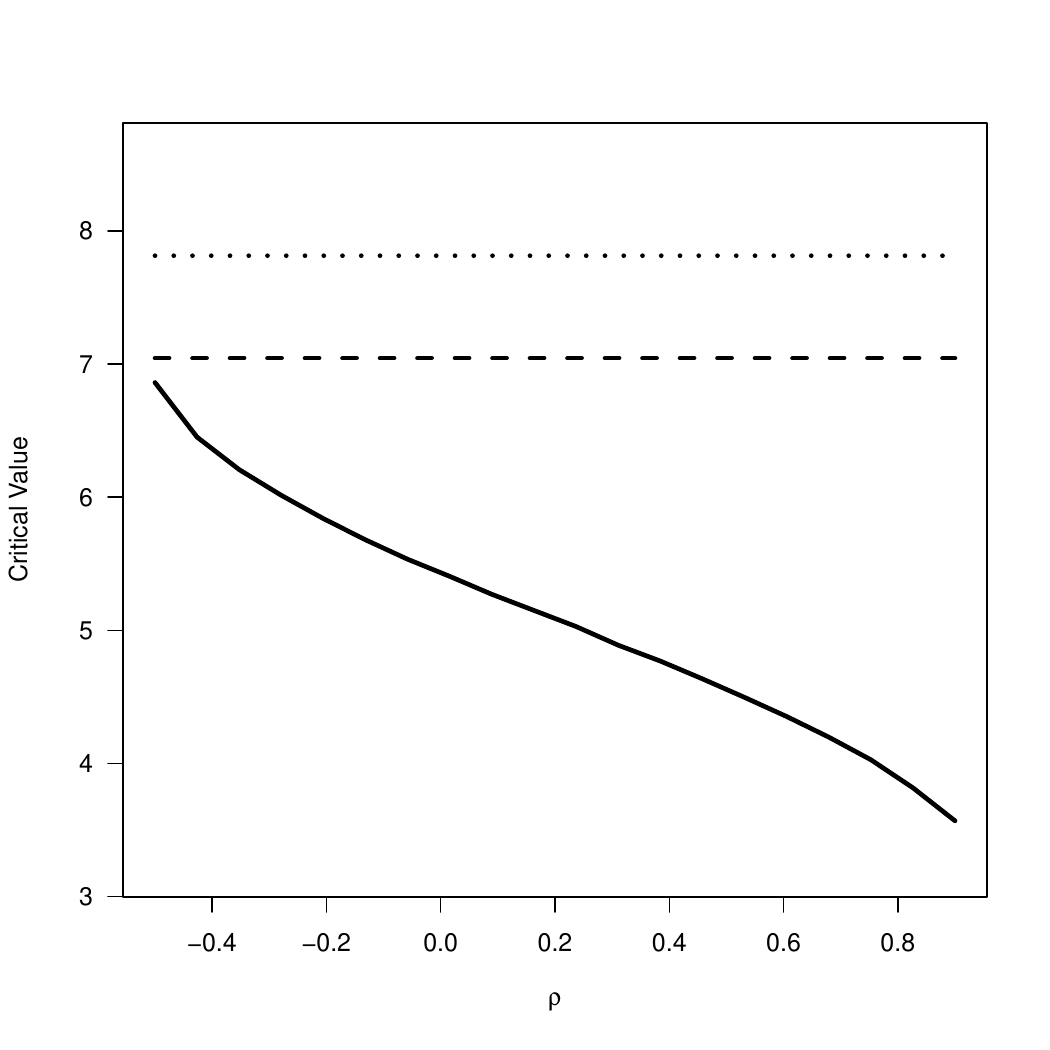}
		\end{subfigure}
		\hfill
		\begin{subfigure}[t]{0.45\textwidth}
			\centering
			\includegraphics[width = \linewidth]{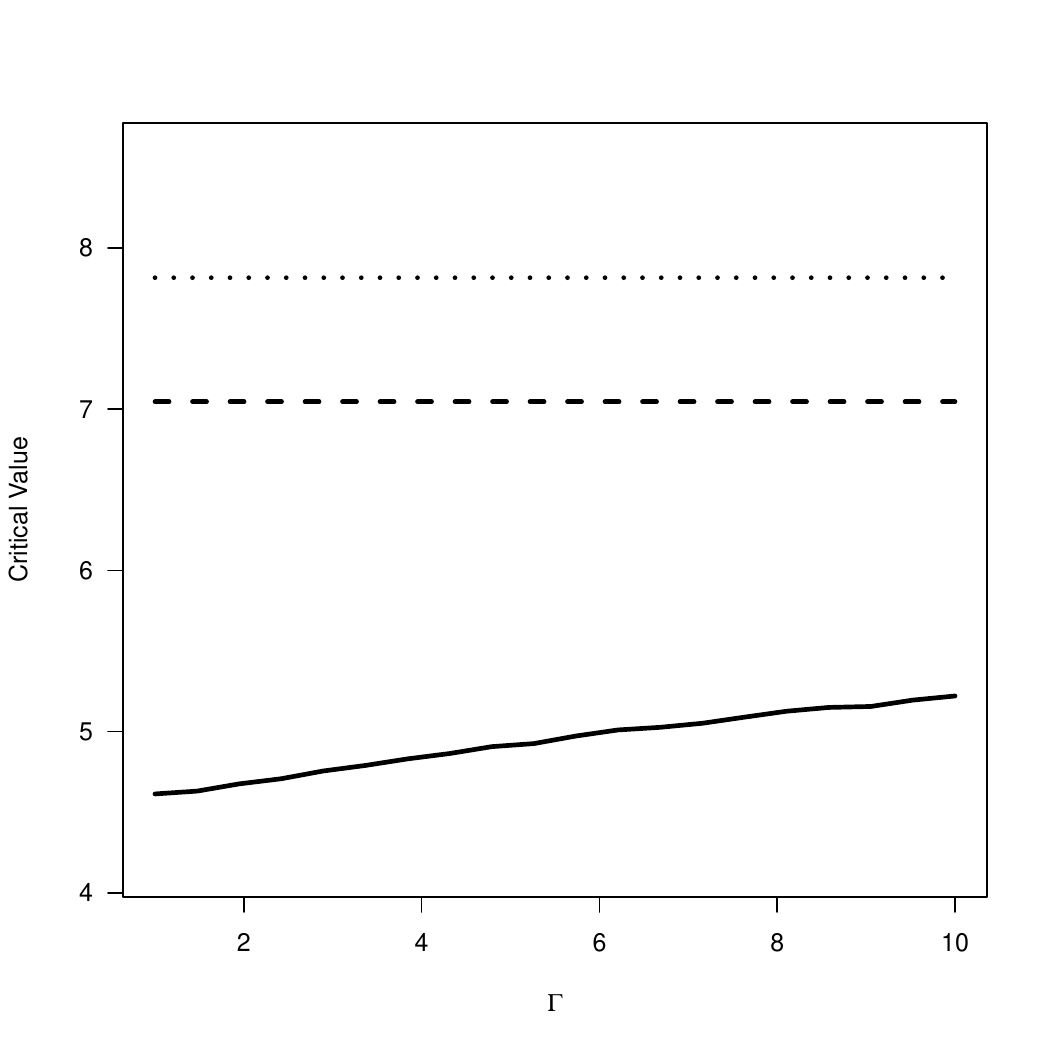}
		\end{subfigure}
		\caption{$1-\alpha$ quantiles for $\alpha = 0.05$ generated for the trivariate scenario $I = 300$, $\tau_{1} = \tau_{2} = \tau_{3} = 0$.  On the left, $\Gamma$ is fixed at 1 while $\rho$ varies over $[-0.5, 0.9]$.  On the right, $\rho$ is fixed at $0.5$ while $\Gamma$ ranges from $1$ to $10$.  In both figures the $\chi^{2}_3$ $1-\alpha$ quantile is the dotted line, that of the naive bound derived from (\ref{eq:naive}) is the dashed line, and the $1-\alpha$ quantile coming from optimizing over feasible correlation matrices is the solid line.}
		\label{fig: Critical Value Gap}
	\end{figure}
	
	There is a true, but generally unknown, underlying correlation structure between the test statistics that depends upon the true vector of conditional probabilities $\tilde{\varrho}$.  Thus, the true $1-\alpha$ quantile from the $\ChiBarSq$ distribution with weights based on the true correlation would not change with the value of $\Gamma$ employed in the sensitivity analysis.  As the true unmeasured confounders are unknown, we instead find a conservative critical value based upon the feasible values for $\varrho$ at a given $\Gamma$.  As $\Gamma$ grows so too does the feasible region for the probabilities $\mathcal{P}_\Gamma$; consequently the conservative critical value increases with $\Gamma$ as well.  This explains the trend in the right-hand panel of Figure~\ref{fig: Critical Value Gap}.

	\subsection{The worst-case correlation with bivariate outcomes}
	In the case for $K = 2$, an elementary proof establishes a closed form of the optimizing correlation matrix subject to box constraints.  
	\begin{theorem}
		Suppose that $K = 2$ and $[\ell, u] \subseteq (-1, 1)$.  Over all matrices $M$ in the set
		\begin{equation*}
		S = \left\{ \begin{bmatrix}1 & \rho \\
		\rho & 1 \end{bmatrix} \, : \, \rho \in [\ell, u] \right\}
		\end{equation*}
		the matrix $\begin{bmatrix}1 & \ell \\
		\ell & 1 \end{bmatrix}$ achieves the most conservative (largest possible) $1 - \alpha$ quantile of $\ChiBarSq(M^{-1}, \Lambda_+)$.
	\end{theorem}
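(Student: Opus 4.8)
The plan is to reduce the statement to a one–variable monotonicity fact: the $\ChiBarSq$ distribution function, viewed as a function of the correlation $\rho$, is nondecreasing at every fixed threshold, so its $1-\alpha$ quantile is nonincreasing in $\rho$ and is therefore maximized at the left endpoint $\ell$. The hypothesis $[\ell,u]\subseteq(-1,1)$ guarantees that every $M$ in $S$ is positive definite, so each $\ChiBarSq(M^{-1},\Lambda_+)$ is well defined and nondegenerate.

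First I would make the weights of $\ChiBarSq(M^{-1},\Lambda_+)$ fully explicit in the bivariate case. Writing $M=\begin{bmatrix}1&\rho\\\rho&1\end{bmatrix}$ and $X\sim N(0,M^{-1})$, the statistic $\ChiBarSq(M^{-1},\Lambda_+)$ equals the squared $M$-norm of the projection of $X$ onto the orthant $\R^2_+$ in the $M$-inner product, and the $i$th mixing weight in the representation quoted after \eqref{eqn: the chi-bar-sq random variable} is the probability that this projection falls in the relative interior of an $i$-dimensional face of $\R^2_+$. There are three faces to track: the interior, with $w_2(\rho)=\P\{X_1>0,X_2>0\}$; the origin, with $w_0(\rho)=\P\{(MX)_1\le 0,(MX)_2\le 0\}$ by the normal-cone characterization of the projection; and the two half-axes, which together carry $w_1=1-w_0-w_2$. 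Since $M^{-1}$ has correlation $-\rho$ and $MX\sim N(0,M)$ has correlation $\rho$, Sheppard's arcsine orthant-probability formula yields
\begin{equation*}
w_2(\rho)=\tfrac14-\tfrac{1}{2\pi}\arcsin\rho,\qquad w_0(\rho)=\tfrac14+\tfrac{1}{2\pi}\arcsin\rho,\qquad w_1=\tfrac12 .
\end{equation*}

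Next I would assemble the distribution function: for $c>0$,
\begin{equation*}
F_\rho(c):=\P\{\ChiBarSq(M^{-1},\Lambda_+)\le c\}=w_0(\rho)+\tfrac12\,\P\{\chi^2_1\le c\}+w_2(\rho)\,\P\{\chi^2_2\le c\},
\end{equation*}
with $\chi^2_0$ the point mass at $0$. For fixed $\rho$, $c\mapsto F_\rho(c)$ is continuous and strictly increasing on $(0,\infty)$ with $F_\rho(0^+)=w_0(\rho)<\tfrac12$ and $F_\rho(\infty)=1$, so (for $\alpha\le\tfrac12$) the $1-\alpha$ quantile $q(\rho)$ is the unique root of $F_\rho(q(\rho))=1-\alpha$ in $(0,\infty)$. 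The crux is the sign of the $\rho$-derivative of $F_\rho(c)$; differentiating the explicit weights and using $w_0'(\rho)=-w_2'(\rho)=(2\pi\sqrt{1-\rho^2})^{-1}>0$,
\begin{equation*}
\frac{\partial}{\partial\rho}F_\rho(c)=w_0'(\rho)+w_2'(\rho)\,\P\{\chi^2_2\le c\}=\frac{1-\P\{\chi^2_2\le c\}}{2\pi\sqrt{1-\rho^2}}\ge 0 ,
\end{equation*}
with strict inequality for every finite $c$; hence $F_\rho(c)$ is nondecreasing in $\rho$ at each fixed $c$. To conclude I would invert: if $\ell\le\rho_1\le\rho_2\le u$ then $F_{\rho_2}(q(\rho_1))\ge F_{\rho_1}(q(\rho_1))=1-\alpha=F_{\rho_2}(q(\rho_2))$, and since $F_{\rho_2}$ is strictly increasing this forces $q(\rho_2)\le q(\rho_1)$, so $\rho\mapsto q(\rho)$ is nonincreasing on $[\ell,u]$ and attains its maximum at $\rho=\ell$. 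The only step needing genuine care is the first one — pinning down the weights, and in particular keeping straight that $\ChiBarSq(M^{-1},\Lambda_+)$ is generated by a normal vector of correlation $-\rho$, so that increasing $\rho$ transfers mass from the $\chi^2_2$ component onto the atom at $0$ and thereby \emph{shrinks} the quantile. Once the weights and the signs of $w_0',w_2'$ are in hand, the mixture form of $F_\rho$, its monotonicity in $c$, and the quantile-inversion step are all routine.
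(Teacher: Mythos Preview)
Your proof is correct and follows essentially the same route as the paper: both reduce the question to showing that, with $w_1\equiv \tfrac12$ fixed, the mixing weight on the $\chi^2_2$ component is decreasing in $\rho$ (equivalently, the weight on the point mass at zero is increasing), so the distribution function is nondecreasing in $\rho$ and the $1-\alpha$ quantile is maximized at $\rho=\ell$. The only difference is in execution---you compute the weights explicitly via Sheppard's arcsine formula and differentiate, whereas the paper argues geometrically about the angular measure of the region $R_2$ as the boundary ray (slope $\rho-\rho^{-1}$) rotates with $\rho$; your analytic version is a bit cleaner in that it makes the monotonicity of $F_\rho(c)$ and the quantile-inversion step fully explicit.
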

	\begin{proof}
		Say that $X \sim \ChiBarSq(M^{-1}, \Lambda_+)$ for $M^{-1} \in S$.  From \citet{SS02} the probability
		\begin{align*}
		\Prob{X \leq c} &= w_{0}(2, M^{-1})\Prob{\chi_{0}^{2} \leq c} +  
		\frac{1}{2}\Prob{\chi_{1}^{2} \leq c} +  w_{2}(2, M^{-1})\Prob{\chi_{2}^{2} \leq c}\\
		&= w_{2}(2, M)\Prob{\chi_{0}^{2} \leq c} +  
		\frac{1}{2}\Prob{\chi_{1}^{2} \leq c} +  w_{0}(2, M)\Prob{\chi_{2}^{2} \leq c}
		\end{align*}
		where each $\chi_{i}^{2}$ is an independent random variable with $\chi_{i}^{2}$ distribution.  The value $w_{2 - i}(2, M)$ is the probability that the projection, under the norm induced by the quadratic form $x^{T}Mx$, of a standard bivariate normal random vector onto the non-negative orthant has exactly $2 - i$ positive components.  By an argument presented in \citet{SS02}, this interpretation of $w_{2 - i}(2, M)$ is equivalent to defining $w_{2 - i}(2, M)$ as the probability that a standard bivariate normal random variable $Z$ falls into $R_{i} = \left\{x \in \R^{2} \given \sum_{k=1}^2 1(b_k > 0) = i\right\}$ where $b = M^{1/2}z$.  Since $w_{2}(2, M) +  w_{0}(2, M) = 1$ and $\Prob{\chi_{0}^{2} \leq c} \geq \Prob{\chi_{2}^{2} \leq c}$ for all scalars $c$ it suffices to maximize $w_{2}(2, M)$.
		
		Taking 
		\begin{equation*}
		M = \begin{bmatrix}1 & \rho \\ \rho & 1 \end{bmatrix}
		\end{equation*}
		gives that maximizing $w_{2}(2, M)$ is equivalent to maximizing the area $R_{2}$ in Figure~\ref{fig: chibarsq bivariate proof}
		\begin{figure}[H]
			\centering
			\begin{tikzpicture}
			\draw[dashed,->] (0,0) -- (2,0) node[anchor = west] {$x_{1}$};
			\draw[dashed,->] (0,0) -- (0,2) node[anchor = south] {$x_{2}$};
			\draw[dashed,->] (0,0) -- (-2,0);
			\draw[dashed,->] (0,0) -- (0,-2);
			
			\coordinate (A) at (1, 2) {};
			\coordinate (B) at (2, 0) {};
			\coordinate (C) at (-2, 1) {};
			\coordinate (D) at (0, -2) {};
			\coordinate (0) at (0, 0) {};
			
			\draw[thin,->] (0,0) -- (1,2);
			\draw[thin,->] (0,0) -- (2,0);
			
			\draw[thin,->] (0,0) -- (-2,1);
			\draw[thin,->] (0,0) -- (0,-2);
			
			\tkzMarkRightAngle[draw=black,size=.2](A,0,C);
			\tkzMarkRightAngle[draw=black,size=.2](B,0,D);
			
			\draw (1.5,1) node[anchor = center] {$R_{2}$};
			
			\draw (A) node[anchor=south west] {$A$};
			\end{tikzpicture}
			\caption{Pictorial representation of the region $R_{2}$.  The upper right boundary of $R_{2}$ is the line given by $A$.}
			\label{fig: chibarsq bivariate proof}
		\end{figure}
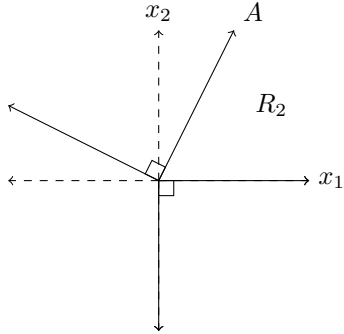
		The slope of the line $A$ is $\rho - \rho^{-1}$ when $\rho \neq 0$ and $A$ is vertical when $\rho = 0$.  Maximizing $R_{2}$ corresponds to taking $\rho$ as small as possible within $[\ell, u]$.  Thus the matrix $\rho = \ell$ achieves the most conservative $1 - \alpha$ critical value of $\ChiBarSq(M^{-1}, \Lambda_+)$.
		
	\end{proof}

	\subsection{A bivariate illustration of the $\ChiBarSq$ distribution}
	Consider a mean-zero bivariate normal with covariance $V$ and consider the distribution of $\ChiBarSq(V, \Lambda_+)$.  By the law of total probability, $\Prob{\bar{\chi}^2(V, \Lambda_+) \leq c} = \sum_{i = 0}^{2}\P\{\bar{\chi}^2(V, \Lambda_+) \leq c \given X \in R_{i}\}\P(X \in R_{i})$ where $R_{0}, R_1$, and $R_{2}$ are disjoint coverings of $\R^{2}$. Let $b=V^{-1/2}x$, and set $R_{i} = \left\{x \in \R^{2} \given \sum_{k=1}^2 1(b_k > 0) = i\right\}$; this is shown in Figure~\ref{fig: chibarsq example regions}.
	\begin{figure}[H]
		\centering
		\begin{subfigure}[t]{0.49\textwidth}
			\centering
			\begin{tikzpicture}
			\draw[dashed,->] (0,0) -- (2,0) node[anchor = west] {$x_{1}$};
			\draw[dashed,->] (0,0) -- (0,2) node[anchor = south] {$x_{2}$};
			\draw[dashed,->] (0,0) -- (-2,0);
			\draw[dashed,->] (0,0) -- (0,-2);
			
			\coordinate (A) at (0, 2) {};
			\coordinate (B) at (2, 0) {};
			\coordinate (C) at (-2, 0) {};
			\coordinate (D) at (0, -2) {};
			\coordinate (0) at (0, 0) {};
			
			\draw[thin,->] (0,0) -- (A);
			\draw[thin,->] (0,0) -- (B);
			
			\draw[thin,->] (0,0) -- (C);
			\draw[thin,->] (0,0) -- (D);
			
			\tkzMarkRightAngle[draw=black,size=.2](A,0,C);
			\tkzMarkRightAngle[draw=black,size=.2](B,0,D);
			
			\draw (1.5,1.5) node[anchor = center] {$R_{2}$};
			\draw (-1.5,1.5) node[anchor = center] {$R_{1}$};
			\draw (-1.5,-1.5) node[anchor = center] {$R_{0}$};
			\draw (1.5,-1.5) node[anchor = center] {$R_{1}$};
			\end{tikzpicture}
		\end{subfigure}
		\hfill
		\begin{subfigure}[t]{0.49\textwidth}
			\centering
			\begin{tikzpicture}
			\draw[dashed,->] (0,0) -- (2,0) node[anchor = west] {$x_{1}$};
			\draw[dashed,->] (0,0) -- (0,2) node[anchor = south] {$x_{2}$};
			\draw[dashed,->] (0,0) -- (-2,0);
			\draw[dashed,->] (0,0) -- (0,-2);
			
			\coordinate (A) at (1, 2) {};
			\coordinate (B) at (2, 1) {};
			\coordinate (C) at (-2, 1) {};
			\coordinate (D) at (1, -2) {};
			\coordinate (0) at (0, 0) {};
			
			\draw[thin,->] (0,0) -- (1,2);
			\draw[thin,->] (0,0) -- (2,1);
			
			\draw[thin,->] (0,0) -- (-2,1);
			\draw[thin,->] (0,0) -- (1,-2);
			
			\tkzMarkRightAngle[draw=black,size=.2](A,0,C);
			\tkzMarkRightAngle[draw=black,size=.2](B,0,D);
			
			\draw (1.5,1.5) node[anchor = center] {$R_{2}$};
			\draw (-.5,1.5) node[anchor = center] {$R_{1}$};
			\draw (-1.5,-1.5) node[anchor = center] {$R_{0}$};
			\draw (1.5,-.5) node[anchor = center] {$R_{1}$};
			\end{tikzpicture}
		\end{subfigure}
		\caption{The regions corresponding to different distributional forms of the likelihood ratio statistic.  In the left image $V = I_{2 \times 2}$; the right image illustrates the general case, in this case the correlation is $-0.8$.}
		\label{fig: chibarsq example regions}
	\end{figure}
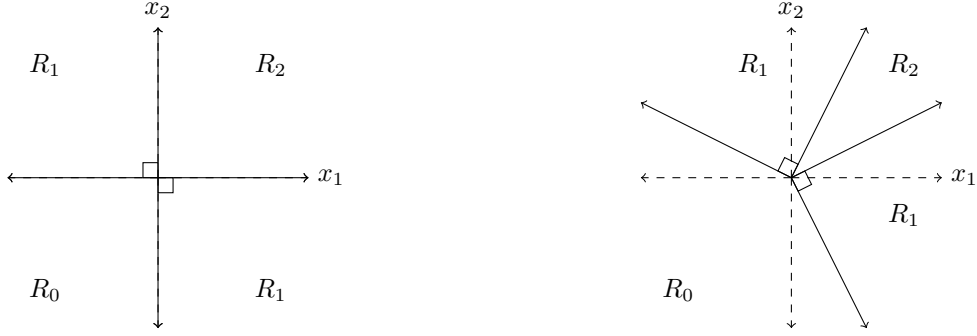

	Within each $R_i$, $\bar{\chi}^2(V,\Lambda_+) \sim \chi^2_i$, where $\chi^2_0$ is a point mass at zero. The weights of the $\bar{\chi}^2_K(V,\Lambda_+)$ are determined by the probability of falling into each partition, and are seen to depend on the covariance $V$. 
	
	Expansive literature exists on the $\ChiBarSq$ distribution.  The paper \citet{K63} introduces the topic in the context of order constrained one-sided tests; Chapter 3 of \citet{SS05} contains detailed examples and derivations as well as a collection of many contemporary results; and \citet{Shap85} discusses the weights  $w_{i}(k, V, C)$ extensively.    
	
	\section{Matching Details for Smoking and Polycyclic Aromatic Hydrocarbons}
	Individuals were classified as cigarette smokers or as non-cigarette-smokers in accordance with the criteria used in \citet{FS16}.  This divided the population of 1638 total individuals into 432 cigarette smokers and 1206 non-cigarette-smokers.  The population of non-smokers did include those who may have smoked in the past but had stopped smoking by the time of the survey, as well as individuals who had never smoked cigarettes.  The individuals were placed into matched groups using a full matching procedure \citep[\S 8.5]{R10}; thus each group contained a single treated unit and multiple control units or a single control unit and multiple treated units.  Pre-treatment covariates were selected based upon recent medical research. To form the fully-matched sets, propensity score caliper with a rank-based Mahalanobis distance for within-caliper distance was used.  The caliper was set at 0.08 and logistic regression was performed to estimate propensity scores \citep[\S 8]{R10}. See \citet[Appendix A]{FS16} for further implementation details.
	
	
	\bibliographystyle{plainnat}
	\bibliography{bibliography}

\begin{thebibliography}{38}
\providecommand{\natexlab}[1]{#1}
\providecommand{\url}[1]{\texttt{#1}}
\expandafter\ifx\csname urlstyle\endcsname\relax
  \providecommand{\doi}[1]{doi: #1}\else
  \providecommand{\doi}{doi: \begingroup \urlstyle{rm}\Url}\fi

\bibitem[Bostr\"{o}m et~al.(2002)Bostr\"{o}m, Gerde, Hanberg, Jernstr\"{o}m,
  Johansson, Kyrklund, Rannug, T\"{o}rnqvist, Victorin, and Westerholm]{BGH02}
Carl-Elis Bostr\"{o}m, Per Gerde, Annika Hanberg, Bengt Jernstr\"{o}m, Christer
  Johansson, Titus Kyrklund, Agneta Rannug, Margareta T\"{o}rnqvist, Katarina
  Victorin, and Roger Westerholm.
\newblock Cancer risk assessment, indicators, and guidelines for polycyclic
  aromatic hydrocarbons in the ambient air.
\newblock \emph{Environmental Health Perspectives}, 110:\penalty0 451--488,
  2002.

\bibitem[Boyd and Vandenberghe(2004)]{BV04}
Stephen Boyd and Lieven Vandenberghe.
\newblock \emph{Convex optimization}.
\newblock Cambridge University Press, Cambridge, 2004.
\newblock ISBN 0-521-83378-7.
\newblock URL \url{https://doi.org/10.1017/CBO9780511804441}.

\bibitem[{Caughey} et~al.(2017){Caughey}, {Dafoe}, and {Miratrix}]{CDM17}
Devin {Caughey}, Allan {Dafoe}, and Luke {Miratrix}.
\newblock {Beyond the Sharp Null: Randomization Inference, Bounded Null
  Hypotheses, and Confidence Intervals for Maximum Effects}.
\newblock \emph{arXiv e-prints}, art. arXiv:1709.07339, Sep 2017.

\bibitem[Charnes and Cooper(1962)]{CC62}
A.~Charnes and W.~W. Cooper.
\newblock Programming with linear fractional functionals.
\newblock \emph{Naval Res. Logist. Quart.}, 9:\penalty0 181--186, 1962.
\newblock ISSN 0028-1441.
\newblock \doi{10.1002/nav.3800090303}.
\newblock URL \url{https://doi.org/10.1002/nav.3800090303}.

\bibitem[Chung and Romano(2016)]{CR16}
EunYi Chung and Joseph~P. Romano.
\newblock Multivariate and multiple permutation tests.
\newblock \emph{J. Econometrics}, 193\penalty0 (1):\penalty0 76--91, 2016.
\newblock ISSN 0304-4076.
\newblock \doi{10.1016/j.jeconom.2016.01.003}.
\newblock URL
  \url{https://doi-org.libproxy.mit.edu/10.1016/j.jeconom.2016.01.003}.

\bibitem[Cochran(1965)]{coc65}
William~G Cochran.
\newblock The planning of observational studies of human populations.
\newblock \emph{Journal of the Royal Statistical Society. Series A (General)},
  128\penalty0 (2):\penalty0 234--266, 1965.

\bibitem[Fogarty(2018)]{fog18mit}
Colin~B Fogarty.
\newblock On mitigating the analytical limitations of finely stratified
  experiments.
\newblock \emph{Journal of the Royal Statistical Society: Series B (Statistical
  Methodology)}, 80:\penalty0 1035--1056, 2018.

\bibitem[Fogarty and Small(2016)]{FS16}
Colin~B. Fogarty and Dylan~S. Small.
\newblock Sensitivity analysis for multiple comparisons in matched
  observational studies through quadratically constrained linear programming.
\newblock \emph{J. Amer. Statist. Assoc.}, 111\penalty0 (516):\penalty0
  1820--1830, 2016.
\newblock ISSN 0162-1459.
\newblock URL \url{https://doi.org/10.1080/01621459.2015.1120675}.

\bibitem[Gastwirth et~al.(2000)Gastwirth, Krieger, and Rosenbaum]{gas00}
Joseph~L Gastwirth, Abba~M Krieger, and Paul~R Rosenbaum.
\newblock Asymptotic separability in sensitivity analysis.
\newblock \emph{Journal of the Royal Statistical Society: Series B (Statistical
  Methodology)}, 62\penalty0 (3):\penalty0 545--555, 2000.

\bibitem[Hansen(2004)]{Han04}
Ben~B Hansen.
\newblock Full matching in an observational study of coaching for the {SAT}.
\newblock \emph{Journal of the American Statistical Association}, 99\penalty0
  (467):\penalty0 609--618, 2004.
\newblock \doi{10.1198/016214504000000647}.
\newblock URL \url{https://doi.org/10.1198/016214504000000647}.

\bibitem[Hiriart-Urruty and Lemarechal(2013)]{HUL13}
Jean-Baptiste Hiriart-Urruty and Claude Lemarechal.
\newblock \emph{Convex analysis and minimization algorithms I: Fundamentals.},
  volume 305.
\newblock Springer Science and Business Media, 2013.

\bibitem[Kud\^o(1963)]{K63}
Akio Kud\^o.
\newblock A multivariate analogue of the one-sided test.
\newblock \emph{Biometrika}, 50:\penalty0 403--418, 1963.
\newblock ISSN 0006-3444.
\newblock URL \url{https://doi.org/10.1093/biomet/50.3-4.403}.

\bibitem[Marcus et~al.(1976)Marcus, Eric, and Gabriel]{mar76}
Ruth Marcus, Peritz Eric, and K~Ruben Gabriel.
\newblock On closed testing procedures with special reference to ordered
  analysis of variance.
\newblock \emph{Biometrika}, 63\penalty0 (3):\penalty0 655--660, 1976.

\bibitem[Perlman(1969)]{per69}
Michael~D Perlman.
\newblock One-sided testing problems in multivariate analysis.
\newblock \emph{The Annals of Mathematical Statistics}, 40\penalty0
  (2):\penalty0 549--567, 1969.

\bibitem[Plackett(1954)]{P54}
R.~L. Plackett.
\newblock A reduction formula for normal multivariate integrals.
\newblock \emph{Biometrika}, 41:\penalty0 351--360, 1954.
\newblock ISSN 0006-3444.
\newblock \doi{10.1093/biomet/41.3-4.351}.
\newblock URL \url{https://doi.org/10.1093/biomet/41.3-4.351}.

\bibitem[Rao(1973)]{Rao73}
C.~Radhakrishna Rao.
\newblock \emph{Linear statistical inference and its applications}.
\newblock New York: John Wiley \& Sons, second edition, 1973.

\bibitem[Robertson et~al.(1988)Robertson, Wright, and Dykstra]{rob88order}
Tim Robertson, FT~Wright, and RL~Dykstra.
\newblock \emph{Order restricted statistical inference.}
\newblock John Wiley \& Sons, New York, 1988.

\bibitem[Rosenbaum(1995)]{ros95}
Paul~R Rosenbaum.
\newblock Quantiles in nonrandom samples and observational studies.
\newblock \emph{Journal of the American Statistical Association}, 90\penalty0
  (432):\penalty0 1424--1431, 1995.

\bibitem[Rosenbaum(1997)]{ros97}
Paul~R Rosenbaum.
\newblock Signed rank statistics for coherent predictions.
\newblock \emph{Biometrics}, 53\penalty0 (2):\penalty0 556--566, 1997.

\bibitem[Rosenbaum(2002)]{R02}
Paul~R Rosenbaum.
\newblock \emph{{Observational studies}}.
\newblock Springer, New York, 2002.

\bibitem[Rosenbaum(2004)]{ros04}
Paul~R Rosenbaum.
\newblock Design sensitivity in observational studies.
\newblock \emph{Biometrika}, 91\penalty0 (1):\penalty0 153--164, 2004.

\bibitem[Rosenbaum(2007)]{ros07}
Paul~R Rosenbaum.
\newblock Sensitivity analysis for {M}-estimates, tests, and confidence
  intervals in matched observational studies.
\newblock \emph{Biometrics}, 63\penalty0 (2):\penalty0 456--464, 2007.

\bibitem[Rosenbaum(2010{\natexlab{a}})]{R10}
Paul~R. Rosenbaum.
\newblock \emph{Design of observational studies}.
\newblock Springer Series in Statistics. Springer, New York,
  2010{\natexlab{a}}.
\newblock ISBN 978-1-4419-1212-1.
\newblock URL \url{https://doi.org/10.1007/978-1-4419-1213-8}.

\bibitem[Rosenbaum(2010{\natexlab{b}})]{designofobs}
Paul~R Rosenbaum.
\newblock \emph{Design of observational studies}.
\newblock Springer, New York, 2010{\natexlab{b}}.

\bibitem[Rosenbaum(2013)]{R13}
Paul~R. Rosenbaum.
\newblock Impact of multiple matched controls on design sensitivity in
  observational studies.
\newblock \emph{Biometrics}, 69\penalty0 (1):\penalty0 118--127, 2013.
\newblock ISSN 0006-341X.
\newblock \doi{10.1111/j.1541-0420.2012.01821.x}.
\newblock URL \url{https://doi.org/10.1111/j.1541-0420.2012.01821.x}.

\bibitem[Rosenbaum(2015)]{ros15quasi}
Paul~R Rosenbaum.
\newblock How to see more in observational studies: Some new quasi-experimental
  devices.
\newblock \emph{Annual Review of Statistics and Its Application}, 2:\penalty0
  21--48, 2015.

\bibitem[Rosenbaum(2016)]{ros16}
Paul~R Rosenbaum.
\newblock Using {S}cheff{\'e} projections for multiple outcomes in an
  observational study of smoking and periodontal disease.
\newblock \emph{The Annals of Applied Statistics}, 10\penalty0 (3):\penalty0
  1447--1471, 2016.

\bibitem[Rosenbaum(2018)]{ros18}
Paul~R. Rosenbaum.
\newblock Sensitivity analysis for stratified comparisons in an observational
  study of the effect of smoking on homocysteine levels.
\newblock \emph{Annals of Applied Statistics}, to appear, 2018.

\bibitem[Scheff{\'e}(1953)]{sch53}
Henry Scheff{\'e}.
\newblock A method for judging all contrasts in the analysis of variance.
\newblock \emph{Biometrika}, 40\penalty0 (1-2):\penalty0 87--110, 1953.

\bibitem[Sen and Silvapulle(2002)]{SS02}
Pranab~K. Sen and Mervyn~J. Silvapulle.
\newblock An appraisal of some aspects of statistical inference under
  inequality constraints.
\newblock \emph{J. Statist. Plann. Inference}, 107\penalty0 (1-2):\penalty0
  3--43, 2002.
\newblock ISSN 0378-3758.
\newblock URL \url{https://doi.org/10.1016/S0378-3758(02)00242-2}.

\bibitem[Shadish et~al.(2002)Shadish, Cook, and Campbell]{sha02}
WR~Shadish, Thomas~D Cook, and DT~Campbell.
\newblock \emph{Experimental and Quasi-Experimental Designs for Generalized
  Causal Inference}.
\newblock Wadsworth, Belmont, 2002.

\bibitem[Shapiro(1985)]{Shap85}
Alexander Shapiro.
\newblock Asymptotic distribution of test statistics in the analysis of moment
  structures under inequality constraints.
\newblock \emph{Biometrika}, 72\penalty0 (1):\penalty0 133--144, 1985.
\newblock ISSN 0006-3444.
\newblock \doi{10.1093/biomet/72.1.133}.
\newblock URL \url{https://doi.org/10.1093/biomet/72.1.133}.

\bibitem[Shapiro(2003)]{S03}
Alexander Shapiro.
\newblock Scheffe's method for constructing simultaneous confidence intervals
  subject to cone constraints.
\newblock \emph{Statist. Probab. Lett.}, 64\penalty0 (4):\penalty0 403--406,
  2003.
\newblock ISSN 0167-7152.
\newblock URL \url{https://doi.org/10.1016/S0167-7152(03)00205-0}.

\bibitem[Shor(1985)]{Shor85}
N.~Z. Shor.
\newblock \emph{Minimization methods for nondifferentiable functions}, volume~3
  of \emph{Springer Series in Computational Mathematics}.
\newblock Springer-Verlag, Berlin, 1985.
\newblock ISBN 3-540-12763-1.
\newblock \doi{10.1007/978-3-642-82118-9}.
\newblock URL \url{https://doi.org/10.1007/978-3-642-82118-9}.
\newblock Translated from the Russian by K. C. Kiwiel and A. Ruszczy\'nski.

\bibitem[Silvapulle and Sen(2005)]{SS05}
Mervyn~J. Silvapulle and Pranab~K. Sen.
\newblock \emph{Constrained statistical inference}.
\newblock Hoboken: John Wiley \& Sons, 2005.
\newblock ISBN 0-471-20827-2.

\bibitem[Tomar and Asma(2000)]{TA00}
Scott~L. Tomar and Samira Asma.
\newblock Smoking-attributable periodontitis in the {U}nited {S}tates:
  {F}indings from {NHANES III}.
\newblock \emph{Journal of Periodontology}, 71\penalty0 (5):\penalty0 743--751,
  2000.
\newblock \doi{10.1902/jop.2000.71.5.743}.
\newblock URL
  \url{https://onlinelibrary.wiley.com/doi/abs/10.1902/jop.2000.71.5.743}.

\bibitem[{Wu} and {Ding}(2018)]{RandTestsWeakNulls}
Jason {Wu} and Peng {Ding}.
\newblock {Randomization Tests for Weak Null Hypotheses}.
\newblock \emph{arXiv e-prints}, art. arXiv:1809.07419, Sep 2018.

\bibitem[Zubizarreta(2012)]{zub12}
Jos{\'e}~R Zubizarreta.
\newblock Using mixed integer programming for matching in an observational
  study of kidney failure after surgery.
\newblock \emph{Journal of the American Statistical Association}, 107\penalty0
  (500):\penalty0 1360--1371, 2012.

\end{thebibliography}
	
\end{document}